\documentclass[letterpaper, 10 pt, journal, onside, twocolumn, nofonttune]{IEEEtran}  % Comment this line out if you need a4paper
%\documentclass[a4paper, 10pt, conference]{ieeeconf}      % Use this line for a4 paper

%In case you encounter the following error:
%Error 1010 The PDF file may be corrupt (unable to open PDF file) OR
%Error 1000 An error occurred while parsing a contents stream. Unable to analyze the PDF file.
%This is a known problem with pdfLaTeX conversion filter. The file cannot be opened with acrobat reader
%Please use one of the alternatives below to circumvent this error by uncommenting one or the other
%\pdfobjcompresslevel=0
\pdfminorversion=4
\IEEEoverridecommandlockouts                              % This command is only needed if 
                                                          % you want to use the \thanks command
% Needed to meet printer requirements.
\usepackage[top=0.833in, left=0.667in, right=0.667in, bottom=0.597in, includefoot]{geometry}

% See the \addtolength command later in the file to balance the column lengths
% on the last page of the document

% Double spacing for edits
%\linespread{2.0}

\usepackage{graphics} % for pdf, bitmapped graphics files
\usepackage{epsfig} % for postscript graphics files
\usepackage{color}
\usepackage[table,xcdraw,dvipsnames]{xcolor}

% subfigures: https://www.overleaf.com/learn/latex/How_to_Write_a_Thesis_in_LaTeX_(Part_3):_Figures,_Subfigures_and_Tables
\usepackage{caption}
\usepackage{subcaption}
\usepackage{multirow}

\usepackage{amsmath} % assumes amsmath package installed
\usepackage{nicefrac}
\usepackage{amssymb}  % assumes amsmath package installed^\prime
\usepackage{bm}  % assumes amsmath package installed - for boldsymbol
\usepackage{stmaryrd} % double brackets https://tex.stackexchange.com/questions/107252/double-square-brackets/107464
\usepackage{mathtools}% http://ctan.org/pkg/mathtools

%%%%%%%%%%%%%%%%%%%%%%%%%%%%%%%%%%%%%%%%%%%%%%%%%%%%%%%%%%%%%%5
\usepackage{amsthm}
\usepackage{bussproofs}
\theoremstyle{plain}
\newtheorem{proposition}{Proposition}

\newtheorem{theorem}{Theorem}

\theoremstyle{definition}
 
 \newenvironment{definition}
   {\pushQED{\qed}\definitionx}
   {\popQED\enddefinitionx}

\newenvironment{problem}
  {\pushQED{\qed}\problemx}
  {\popQED\endproblemx}

\newenvironment{example}
  {\pushQED{\qed}\examplex}
  {\popQED\endexamplex}
\theoremstyle{remark}

% \let\proof\relax
% \let\endproof\relax
%\AtEndEnvironment{proof}{\null\hfill\qedsymbol}%
%%%%%%%%%%%%%%%%%%%%%%%%%%%%%%%%%%%%%%%%%%%%%%%%%%%%%%%%%%%%%%5

\usepackage{listings,comment}
\usepackage{algorithm}
\usepackage[noend]{algorithmic}
% \SetKwInput{KwInput}{Input}                % Set the Input
% \SetKwInput{KwOutput}{Output}              % set the Output

%\algnewcommand{\IIf}[1]{\State\algorithmicif\ #1\ \algorithmicthen}
%\algnewcommand{\EndIIf}{\unskip\ \algorithmicend\ \algorithmicif}

\usepackage{tikz}
\usepackage{pgfplots}
\pgfplotsset{compat=1.15}

\usepackage{forest} % https://tex.stackexchange.com/questions/108710/square-edges-in-forest-package/108728#108728

\usetikzlibrary{arrows,positioning,chains,fit,shapes,automata} 
\usepgfplotslibrary{patchplots}
\tikzset{
    %Define standard arrow tip
    >=stealth',
    %Define style for boxes
    punkt/.style={
           rectangle,
           rounded corners,
           draw=black, very thick,
           text width=12em,
           minimum height=2em,
           text centered},
    % Define arrow style
    pil/.style={
           ->,
           thick,
           shorten <=2pt,
           shorten >=2pt,},
	cross/.style={
    	cross out, 
        draw=black, 
        minimum size=2*(#1-\pgflinewidth), 
        inner sep=0pt, 
        outer sep=0pt},
%default radius will be 1pt. 
	cross/.default={1pt}
}

\makeatletter
\let\NAT@parse\undefined
\makeatother

\usepackage{hyperref}  %hyperref still needs to be put at the end!
\usepackage[capitalise]{cleveref} %nice \ref command \cref, for example \cref{def:name} results into Definition 1
\crefname{examplex}{Example}{Examples}
\crefname{definitionx}{Definition}{Definitions}
\crefname{problemx}{Problem}{Problems}

\interdisplaylinepenalty=2500

\usepackage[symbols,toc]{glossaries-extra}

\makenoidxglossaries

\setabbreviationstyle[acronym]{long-short}
\preto\chapter{\glsresetall}

\newacronym{cegis}{CEGIS}{Counterexample-Guided Inductive Synthesis}
\newacronym{csp}{CSP}{Constraint Satisfiability Problem}
\newacronym{cp}{CP}{Constraint Programming}
\newacronym{smt}{SMT}{Satisfiability Modulo Theories}
\newacronym{lp}{LP}{Linear Programming}
\newacronym{milp}{MILP}{Mixed-Integer Linear Programming}
\newacronym{ips}{IPS}{Intelligent Physical System}
\newacronym{ltl}{LTL}{Linear Temporal Logic}
\newacronym{rtl}{RTL}{Temporal Logic over Reals}
\newacronym{stl}{STL}{Signal Temporal Logic}
\newacronym{mpc}{MPC}{Model Predictive Control}
\newacronym{itmp}{ITMP}{Integrated Task and Motion Planning}
\newacronym{ai}{AI}{Artificial Intelligence}
\newacronym{ff}{FF}{fast forward}
\newacronym{idtmp}{IDTMP}{iteratively deepened task and motion planning}
\newacronym{cosmop}{CoSMoP}{composition of safe motion primitives}
\newacronym{mld}{MLD}{mixed logical dynamic}
\newacronym{pomdp}{POMDP}{partially observable Markov decision process}
\newacronym{prstl}{PrSTL}{probabilistic signal temporal logic}
\newacronym{socp}{SOCP}{Second-Order Cone Programming}
\newacronym{rhc}{RHC}{receding horizon control}
\newacronym{kf}{KF}{Kalman filter}
\newacronym{ukf}{UKF}{unscented Kalman filter}
\newacronym{ekf}{EKF}{extended Kalman filter}
\newacronym{smc}{SMC}{sequencial Monte-Carlo}
\newacronym{lqr}{LQR}{Linear Quadratic Regulator}
\newacronym{lqg}{LQG}{Linear Quadratic Gaussian}
\newacronym{zoh}{ZOH}{zero order hold}
\newacronym{ir}{IR}{infrared}
\newacronym{cps}{CPS}{cyber-physical system}
\newacronym{dof}{DOF}{degrees of freedom}
\newacronym{rrt}{RRT}{Rapidly-exploring Random Tree}
\newacronym{ltlopt}{LTLOpt}{optimal control with linear temporal logic specifications}
\newacronym{ros}{ROS}{Robot Operating System}
\newacronym{bsc}{BSC}{Bounded Satisfiability Checking}
\newacronym{ompl}{OMPL}{Open Motion Planning Library}
\newacronym{dfa}{DFA}{deterministic finite automata}
\newacronym{dba}{DBA}{deterministic Büchi automata}
\newacronym{iis}{IIS}{Irreducibly Inconsistent Set}
\newacronym{dnf}{DNF}{Disjunctive Normal Form}
\newacronym{bmc}{BMC}{Bounded Model Checking}
\newacronym{idrtl}{idRTL}{iterative deepening Real-time Temporal Logic}
\newacronym{sat}{SAT}{Satisfiability}
\newacronym{mlo}{MLO}{Maximum Likelihood Observation}

\makenoidxglossaries
%% Custom commands
\usepackage{xifthen}% provides \isempty test
\usepackage{xspace}% provides an character blank space

\makeatletter
\newcommand{\pushright}[1]{\ifmeasuring@#1\else\omit\hfill$\displaystyle#1$\fi\ignorespaces}
\newcommand{\pushleft}[1]{\ifmeasuring@#1\else\omit$\displaystyle#1$\hfill\fi\ignorespaces}
\makeatother
\DeclareMathOperator*{\argmin}{arg\,min}
\DeclareMathOperator*{\argmax}{arg\,max}

\newcommand{\eye}[4]% size, x, y, rotation
{   \draw[rotate around={#4:(#2,#3)}] (#2,#3) -- ++(-.5*55:#1) (#2,#3) -- ++(.5*55:#1);
    \draw (#2,#3) ++(#4+55:.75*#1) arc (#4+55:#4-55:.75*#1);
    % IRIS
    \draw[fill=gray] (#2,#3) ++(#4+55/3:.75*#1) arc (#4+180-55:#4+180+55:.28*#1);
    %PUPIL, a filled arc 
    \draw[fill=black] (#2,#3) ++(#4+55/3:.75*#1) arc (#4+55/3:#4-55/3:.75*#1);
}

\newcommand{\mFormula}{\ensuremath{\varphi}}
\newcommand{\mTrue}{\ensuremath{\top}}

\newcommand{\mNot}{\ensuremath{\neg}}
\newcommand{\mAnd}{\ensuremath{\wedge}}
\newcommand{\mOr}{\ensuremath{\vee}}

\newcommand{\mUntil}{\ensuremath{\boldsymbol{U}}}
\newcommand{\mRelease}{\ensuremath{\boldsymbol{R}}}
\newcommand{\mAlways}{\ensuremath{\square}}
\newcommand{\mEventually}{\ensuremath{\Diamond}}

\newcommand{\mSat}{\ensuremath{\vDash}}

\title{ \LARGE \bf Active Perception and Control from PrSTL Specifications
\thanks{This work was supported in part by the National Science Foundation
under Grant IIS-1724070, Grant CNS-1830335 and Grant IIS-2007949.}
}

\author{Rafael Rodrigues da Silva$^1$ Vince Kurtz$^1$, and Hai Lin$^{1}$% <-this % stops a space
	\thanks{$^{1}$ All authors are with Department of Electrical Engineering, University of Notre Dame, Notre Dame, IN 46556, USA.
		{\tt\small (rrodri17@nd.edu;~vkurtz@nd.edu;~hlin1@nd.edu)}}
}

\begin{document}

\maketitle
\thispagestyle{empty}
\pagestyle{empty}

\begin{abstract}
Next-generation intelligent systems must plan and execute complex tasks with imperfect information about their environment. As a result, plans must also include actions to learn about the environment. This is known as active perception. Most active perception algorithms rely on reward or cost functions, which are usually challenging to specify and offer few theoretical guarantees. On the other hand, symbolic control methods can account for complex tasks using temporal logic but often do not deal well with uncertainties. This work combines symbolic control with active perception to achieve complex tasks in a partially observed and noisy control system with hybrid dynamics. Our basic idea is to employ a counterexample-guided-inductive-synthesis approach for control from probabilistic signal temporal logic (PrSTL) specifications. Our proposed algorithm combines bounded model checking (BMC) with sampling-based trajectory synthesis for uncertain hybrid systems. Active perception is inherently built into the framework because PrSTL formulas are defined in the chance domain. 
\end{abstract}

\section{Introduction}

%We have seen an unprecedented presence of intelligent systems in our daily life in the last decades. Many essential tasks are automated, from manufacturing processes to smart homes. However, their presence is usually limited in safety-critical tasks, where these automated systems must guarantee safe operation in the physical world. Currently, most intelligent systems have pre-programmed behaviors for a limited variety of contexts. Ideally, 

Future intelligent systems need to work reliably in uncertain and dynamic environments. Hence it is necessary for intelligent systems to perceive the world, extrapolate beliefs about the future, and take actions to improve confidence in these beliefs. This process is known as active perception.

Active perception has been pursued mainly in the robotic literature as a planning problem for a finite state \gls*{pomdp} \cite{valencia2013planning,agha2014firm,bai2014integrated,krishnamurthy2016partially}, which searches for a policy that maximizes the expected value of a reward function. However, specifying reward functions for complex, high-level tasks is difficult, and modeling the system as a finite state system can conceal critical system behaviors.

We are therefore motivated to specify high-level tasks as \gls*{prstl} formulas, which are straightforward to formulate as we illustrate in this paper. In addition, \gls*{prstl} combines real-time temporal logic with chance constraints, which fits our need to capture uncertainties in an active perception problem. We also model the system behavior as a switched linear system with Gaussian noise. This model allows us to inherit the computational efficiency and soundness of Kalman filtering. At the same time, it allows us to describe complex interactions of the system with its physical environment. Since we do not assume perfect knowledge of the initial state and the dynamics, the evolution of the state forms a random process that can be captured by a belief system. Then the active perception problem can be solved as a controller design problem for the belief system with respect to a given \gls*{prstl} specification. % with uncertain and differential constraints. 

% {\color{red} Active perception has been pursued mainly in the robotic literature, where methods like ... POMDP ... have been proposed ... (A brief review of active perception literature ...) However, ... 

% We are therefore motivated to ....} Here we do not assume perfect knowledge of the initial state and its dynamics. Thus, we represent the state as a random process called belief, and a stochastic dynamical system represents its dynamics. To solve this problem, we may need to minimize and control the uncertainty of the current belief using observations about the environment. 

%An active perception task involves both actions and perceptions. Thus, we propose to specify these tasks as \gls*{prstl} formulas, which combine real-time temporal logic with chance constraints. We also model the system behavior as a switched linear system with Gaussian noises. This model allows us to inherit the computational efficiency and soundness of Kalman filtering. At the same time, it helps us to represent complex behaviors of physical systems interacting with logical rules or controllers. Then we solve the active perception problem as a controller design for a given \gls*{prstl} specification with uncertain and differential constraints. 

%{\color{red} Our goal is to ... formal design ... scalable ... uncertainty  ... }  

Controller design for the belief system could quickly become a very daunting task due to the non-convexities of the \gls*{prstl} task specifications. In addition, the derived belief dynamics will introduce non-linear differential constraints. Hence, how to achieve a scalable design with respect to the size of specifications is our primary challenge. 
To mitigate this scalability challenge, our basic idea is to ease the computational cost by separating logical and dynamical constraints. %Hence, we do not restrict to a convex fragment of \gls*{prstl} formulas. To tackle the challenge of unknown observations, we approximate them using \gls*{mlo}\cite{platt2010belief}. Thus, we include the effects of the observation on the belief state during the planning. We introduce a hierarchical control synthesis approach that efficiently finds \gls*{prstl} controllers. 
This results in a hierarchical controller design. On the top, a custom \gls*{bmc} solver extracts sequences of convex chance constraints. At the bottom, a sampling-based motion planning searches for belief trajectories that satisfy these constraints. If no trajectory is found, we proposed an approach based on \gls*{cegis} \cite{alur2013syntax,reynolds2015counterexample} to update the \gls*{bmc}-generated constraints. 

An additional challenge is due to the fact that active perception requires measurement values that are not available during the planning. Some existing work on planning in the belief space uses a particular observation signal for planning, such as the \gls*{mlo} in \cite{platt2010belief}. Others consider observations as random variables. Such methods include the generalized belief state \cite{indelman2016towards} and covariance steering approaches \cite{ridderhof2020chance,zheng2021belief}. However, none of these approaches consider high-level task specifications. To tackle the challenge of unknown observations, we approximate them using the \gls*{mlo} approach \cite{platt2010belief}. As the name suggests, this approximation results in trajectories close to the one obtained when the observation is available. 

Although we deal with the active perception problem, our work is closely related to the existing \gls*{prstl} controller synthesis methods \cite{sadigh2015safe,dey2016fast,zhong2017fast}. In \cite{sadigh2015safe}, the authors extended the mixed-integer formulation used in \cite{raman2014model} to chance constraints and used a \gls*{rhc} approach to handle unknown measurements. Even though the approach focused only on a convex fragment of \gls*{prstl}, the computational cost motivated other works to search for more efficient solver techniques such as a faster \gls*{socp} optimization for safety properties \cite{zhong2017fast} and sampling-based optimization \cite{dey2016fast}. Distinct from these works, we consider the observation effects on the belief state update during planning and do not restrict ourselves to convex fragments of \gls*{prstl}.

Preliminary results of this work appeared in \cite{da2019active,rodriguesdasilva2021automatic,da2021symbolic}. In \cite{da2019active}, we  presented a solution for active perception and control from \gls*{prstl} formulas. In this work, we extend the approximated belief system introduced in \cite{da2019active} to a switched linear system with Gaussian noise. In \cite{rodriguesdasilva2021automatic,da2021symbolic}, \gls*{bmc} and \gls*{lp} were combined to achieve a scalable design for deterministic systems with respect to complex specifications in real-time temporal logic. In this paper, we harvest the insights gained from our prior work on the deterministic system design and propose a new sampling-based method for active perception that uses \gls*{lp} and quantitative semantics of \gls*{prstl} formulas. Code is available at \texttt{\url{https://codeocean.com/capsule/0013534/tree}}.

This paper is organized as follows. In \cref{sec:pr_preliminaries}, we introduce the main concepts used in the paper. In \cref{sec:overall_approach}, we overview the proposed approach and, in \cref{sec:det,sec:abs,sec:bmc,sec:feas,sec:idprstl}, we describe the main steps of this method in more detail. In \cref{sec:experiments}, we illustrate our approach with two examples. Finally, \cref{sec:conclusion} concludes the work.

\section{Preliminaries}\label{sec:pr_preliminaries}

\subsection{Polytopes}

A \emph{polytope} $\mathcal{X} \subseteq \mathbb{R}^n$ is a set in $\mathbb{R}^n$ defined by the intersection of a finite number of closed half-spaces, i.e, $\mathcal{X} := \cap_i \{ \mu_i(\boldsymbol{x}) \leq 0 \}$, where $\mu_i(\boldsymbol{x}) := \boldsymbol{h}_i^\intercal \boldsymbol{x} + c_i$ is a linear function, $H(\mathcal{X})$ is the set of linear functions that defines the polytope $\mathcal{X}$, $\boldsymbol{h}_i \in \mathbb{R}^n$ and $c_i \in \mathbb{R}$ are constants. We can also represent a compact polytope $\mathcal{X} \subset \mathbb{R}^n$ as the convex hull of its vertices, i.e., $\mathcal{X} = \text{conv}\big(V\big)$, where $V$ is a set of vertices.

\subsection{System}\label{sec:prsystem}

We consider switched linear control systems as follows:
\begin{equation}\label{eq:prsystem}
    \begin{aligned}
	    \boldsymbol{x}_{k+1} = & A_{q} \boldsymbol{x}_k + B_{q} \boldsymbol{u}_k + W_{q} \boldsymbol{w}_k, & \boldsymbol{w}_k \sim \mathcal{N}(0, I)\\
	    \boldsymbol{y}_k = & C_{q} \boldsymbol{x}_k + n_{q}(\boldsymbol{x}_k) \boldsymbol{v}_k, & \boldsymbol{v}_k \sim \mathcal{N}(0, I),
    \end{aligned}
\end{equation}
where $\boldsymbol{x}_k \in \mathbb{R}^n$ are the state variables, $\boldsymbol{u}_k \in \mathcal{U} \subseteq \mathbb{R}^m$ are the input variables, $\mathcal{U} \subseteq \mathbb{R}^m$ is a polytope, $\boldsymbol{y}_k \in \mathbb{R}^p$ are the output variables. Each system mode (command) $q \in Q = \{1, 2, \dots, N \}$ is defined by a noise function $n_q : \mathbb{R}^n \rightarrow \mathbb{R}^n$, and constant matrices $A_q \in \mathbb{R}^{n \times n}$, $B_q \in \mathbb{R}^{n \times m}$, $W_q \in \mathbb{R}^{n \times n}$, and $C_q \in \mathbb{R}^{p \times n}$. We assume that the system is subject to mutually uncorrelated zero-mean stationary Gaussian additive disturbances $\boldsymbol{v}_k \in \mathcal{N}(0, I_n)$ and $\boldsymbol{w}_k \in \mathcal{N}(0, I_p)$, where $I_n$ is the identity matrix with dimension $n$. Note that this dynamical system can arise from linearization and sampling of a more general continuous system. In such a case, we denote the sampling period as $T_s$, where $T_s = t_{k+1} - t_k $ for all $k \in \mathbb{N}_{\geq 0}$. 
%We assume that the uncertainty is stable, meaning that the uncertainty does not increase infinitely over time.

\begin{example}\label{ex:light-dark-system}
     Let us consider the scenario presented in \cite[Sec. VI.A]{platt2010belief}, called the light-dark domain. In this example, we have a motion planning problem where the robot's position in the workspace is uncertain. Additionally, the measurment noise depends on the robot's position, as shown in \cref{fig:light-dark-example-workspace}. 
     
     The state space is the workspace plane, $\boldsymbol{x} \in \mathbb{R}^2$, and the robot is modeled as a first-order system controlled by velocity, $\boldsymbol{u} \in \mathbb{R}^2$: $\boldsymbol{x}_{k+1} = \boldsymbol{x}_k + 0.25 \boldsymbol{u}_k$. The observation function is $\boldsymbol{y}_k = \boldsymbol{x}_k + n(\boldsymbol{x}_k) \boldsymbol{v}_k$ with a zero-mean Gaussian noise function as follows: 
     \begin{equation}
         n(\boldsymbol{x}) = 0.1(5 - x_1)^2 + const,
     \end{equation}
     where $\boldsymbol{x} = [x_1, x_2]^\intercal$. 
     
     \begin{figure}
        \centering
        \includegraphics[width=0.9\linewidth]{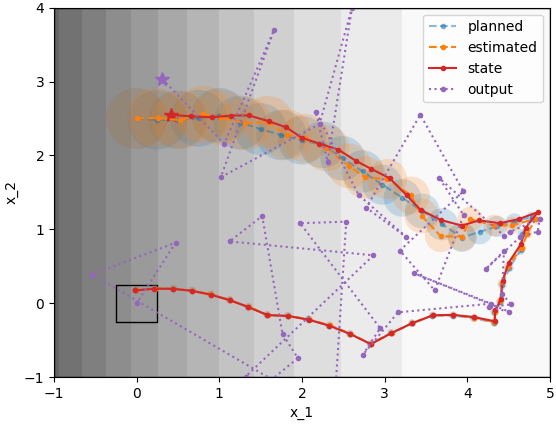}
        \caption{Light-dark example. The shade in the workspace is proportional to measurement noise at that position, the black box is the target, and the blue and orange regions are the $95\%$ confidence region of the belief. The estimated belief trajectory (in orange) tracks the planned belief trajectory (in blue).}
        \label{fig:light-dark-example-workspace}
        \vspace{-0.5cm}
    \end{figure}
    
    The initial belief is an isotropic Gaussian distribution centered at position $[0, 0]^\intercal$ with covariance $0.1I_2$. As shown in \cref{fig:light-dark-example-workspace}, the uncertainty of the initial belief does not allow the robot to achieve $95\%$ of confidence to be inside the target region without getting more information. Hence, the robot must move to the light region to improve its position estimate before approaching the goal. 
\end{example}

\begin{example}\label{ex:laser-grasp-system}
    Now, consider a planar robot manipulator that must grasp a round puck and place it down in a target location as illustrated in \cref{fig:laser-grasp-example-robot}. However, the puck and target locations are not known perfectly a priori. Thus, we use a camera coupled at the end effector to measure the relative distance. The challenge is that the camera is obstructed by the puck when the robot is holding it. Therefore, only trajectories that ``learn'' the target location before grasping the puck can satisfy the specification.
    
  \begin{figure}
    \begin{subfigure}[b]{0.24\textwidth}
        \centering
        \begin{tikzpicture}[auto, >=latex']
            \draw[-] (-0.5, 0) -- (0.5, 0);
            \draw[-] (-0.5, 0) -- (-0.14644661, -0.35355339);
            \draw[-] (-0.25, 0) -- (0.10355339, -0.35355339);
            \draw[-] (0, 0) -- (0.35355339, -0.35355339);
            \draw[-] (0.25, 0) -- (0.60355339, -0.35355339);
            \draw[-] (0.5, 0) -- (0.85355339, -0.35355339);
            \draw[-] (0, 0) -- (0.5, 0.866) circle (2pt) -- (0, 1.732) circle (2pt) -- (0, 1.982);
            \draw[-] (-0.125, 2.232) -- (-0.125, 1.982) -- (0.125, 1.982) -- (0.125, 2.232);
            \draw[-, dashed] (0, 2.107) -- (0.70710678, 2.81410678);
            \draw[-, dashed] (0, 2.107) -- (-0.70710678, 2.81410678);
            \node at (0, 3) {camera};
            \filldraw[black] (-1.9,0.8660254) circle (2pt) node[anchor=east] {puck};
            \draw[-] (0, 0) -- (-0.5      ,  0.8660254) circle (2pt) -- (-1.5      ,  0.8660254) circle (2pt) -- (-1.75     ,  0.8660254);
            \draw[-] (-2, 0.7410254) -- (-1.75, 0.7410254) -- (-1.75, 0.9910254) -- (-2, 0.9910254);
            \draw[-] (0, 0) -- (-0.25881905,  0.96592583) circle (2pt) -- (-0.96592583,  0.25881905) circle (2pt) -- (-0.96592583,  0.00881905);
            \draw[-] (-0.84092583, -0.24118095) -- (-0.84092583, 0.00881905) -- (-1.09092583, 0.00881905) -- (-1.09092583, -0.24118095);
            \draw[-] (-0.59092583, 0) -- (-0.59092583, -0.5) -- (-1.34092583, -0.5)  node[anchor=east] {target} -- (-1.34092583, 0);
        \end{tikzpicture}
        \caption{Planar robot manipulator}
        \label{fig:laser-grasp-example-robot}
    \end{subfigure}
    \begin{subfigure}[b]{0.22\textwidth}
        \centering
        \includegraphics[width=\textwidth]{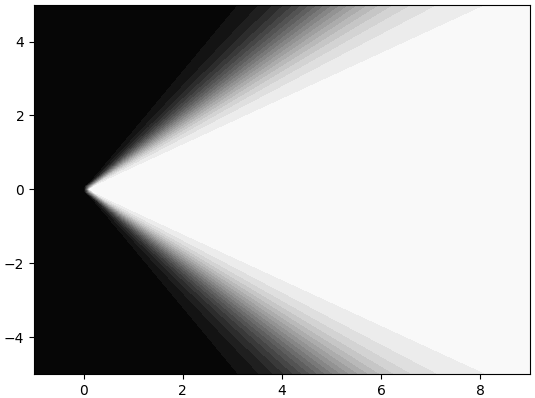}
        \caption{noise model with camera at origin $(0,0)$ and $\theta = 0$ about different relative distances}
        \label{fig:laser-grasp-example-noise-model}
    \end{subfigure}
    \caption{A camera is coupled at the robot end effector. The objective is to move the puck inside the target region. However, the puck and target location are not perfectly known at the beginning of the task. Hence, the camera is used to measure the relative distance between the end effector and these targets.}
    \label{fig:laser-grasp-example}
    \vspace{-0.5cm}
 \end{figure}

    The system state space is formed by three variables: $\boldsymbol{p}_{puck}, \boldsymbol{p}_{target} \in \mathbb{R}^2$ and $\theta \in SO(2)$, where $\theta$ is the end effector orientation angle, and $\boldsymbol{p}_{puck}$ and $\boldsymbol{p}_{target}$ are the relative position between the end effector and the puck and target positions. The dynamics are a first-order system controlled by linear velocities $\boldsymbol{v} \in \mathbb{R}^2: \|\boldsymbol{v}\|_{\infty} \leq 1$ and angular velocities $\omega \in SO(2) : |\omega| \leq 45^o$, and a three ($N = 3$) system commands: gripper opened ($q = 1$), gripper closing/opening ($q = 2$), and gripper closed ($q = 3$):
    \begin{equation}
    \begin{aligned}
         \boldsymbol{p}_{puck, k + 1}  = &  
         \begin{cases}
            \boldsymbol{p}_{puck, k} + \boldsymbol{v}_k, & \textbf{if } q = 1, \\
            \boldsymbol{p}_{puck, k}, & \textbf{if } q = 2 \textbf{ or } q = 3,
         \end{cases} \\
         \boldsymbol{p}_{target, k + 1}  = &  
         \begin{cases}
            \boldsymbol{p}_{target, k} + \boldsymbol{v}_k, & \textbf{if } q = 1 \textbf{ or } q = 3, \\
            \boldsymbol{p}_{target, k}, & \textbf{if } q = 2,
         \end{cases} \\
         \theta_{k + 1}  = &  
         \begin{cases}
            \theta_{k} + \omega_k, & \textbf{if } q = 1 \textbf{ or } q = 3, \\
            \theta_{k}, & \textbf{if } q = 2.
         \end{cases}
    \end{aligned}
    \end{equation}
    Intuitively, we assume that the robot is holding the puck if the gripper is closed. Hence, the relative distance to the puck only changes if the gripper is opened. Additionally, we assume no process noise (i.e. $W_k = O_{3, 3}$).
    
    The measured (output) variables are the puck $\boldsymbol{p}_{puck, k}^y$ and target $\boldsymbol{p}_{target, k}^y$ relative distances. However, if the robot is holding the puck, we cannot see the target location anymore. Thus, the noise is modeled as follows:
    \begin{equation}
    \begin{aligned}
         \boldsymbol{p}_{puck, k}^y  = &  
         \begin{cases}
            \boldsymbol{p}_{puck, k} + n(\boldsymbol{p}_{puck, k}, \theta_k) I_2 \boldsymbol{\nu}_{k}, & \textbf{if } q = 1, \\
            \boldsymbol{p}_{puck, k} + v_{large} I_2 \boldsymbol{\nu}_{k}, & \textbf{otherwise},
         \end{cases} \\
         \boldsymbol{p}_{target, k}^y  = &  
         \begin{cases}
            \boldsymbol{p}_{target, k} + n(\boldsymbol{p}_{target, k}, \theta_k) I_2\boldsymbol{\nu}_{k}, & \textbf{if } q = 1, \\
            \boldsymbol{p}_{target, k} + v_{large} I_2 \boldsymbol{\nu}_{k}, & \textbf{otherwise},
         \end{cases}
    \end{aligned}
    \end{equation}
    where $I_2 \in \mathbb{R}^{2 \times 2}$ is an identity matrix, $v_{large} \in \mathbb{R}$ is a large constant scalar, and the noise function $n: \mathbb{R}^2 \times SO(2) \rightarrow \mathbb{R}$ is modeled as a symmetric squashing function about the angle difference between the relative position orientation ($\angle \boldsymbol{p}_k$) and the end effector orientation $\theta$, as shown in \cref{fig:laser-grasp-example-noise-model}.
\end{example}

\subsection{Belief System}\label{sec:belief_state}

Since only noisy observations $\boldsymbol{y}_k$ are available in System (\ref{eq:prsystem}), we must estimate the state $\boldsymbol{x}_k$. To do so, we assume that the controller keeps track of a \textit{history} of observations and actions up to the current time instant. History can be compactly represented as a random process $\boldsymbol{X}_k$ over the state space such that the \textit{belief state} $P(\boldsymbol{X}_k = \boldsymbol{x}_k)$ is a sufficient statistic for history \cite{krishnamurthy2016partially}. We denote $P(\boldsymbol{X}_k = \boldsymbol{x}_k)$ by $P(\boldsymbol{x}_k)$ and $P(\boldsymbol{Y}_k = \boldsymbol{y}_k)$ by $P(\boldsymbol{y}_k)$, where $\boldsymbol{Y}_k$ is the output random process. The belief state can be tracked using a Bayesian filter:
\begin{equation*}
P(\boldsymbol{x}_{k+1}) = \eta P(\boldsymbol{y}_{k+1}|\boldsymbol{x}_{k+1},\boldsymbol{u}_k)\int_{\boldsymbol{x}} P(\boldsymbol{x}_{k+1}|\boldsymbol{x},\boldsymbol{u}_k)P(\boldsymbol{x}_k) dx,
\end{equation*}
where $\eta$ is a normalization constant \cite{krishnamurthy2016partially}. 

System (\ref{eq:prsystem}) is subject to additive Gaussian noise. Hence, the belief state is a Gaussian process with state dynamics determined by a Kalman filter (KF)\footnote{Note that we could use a different noise model, but the Bayesian filter would no longer be a KF.} \cite{chui2017kalman}. Thus, the belief state $\boldsymbol{X}_k \sim  \mathcal{N}(\hat{\boldsymbol{x}}_k,\Sigma_k^x)$ is a random process with Gaussian distribution. Given a system command consisting of a mode $q$ and an input $\boldsymbol{u}$, and an observation $\boldsymbol{y}_{k+1} \in \mathbb{R}^p$, we can represent the belief dynamics by a deterministic dynamical system defined over the belief mean $\hat{\boldsymbol{x}}_k \in \mathbb{R}^{n}$ and covariance $\Sigma_k^x \in \mathbb{R}^{n \times n}$:
\begin{equation}\label{eq:beliefsys}
\begin{aligned}
\hat{\boldsymbol{x}}_{k+1} = & (I_n - L_{q,k} C_{q}) (A_{q} \hat{\boldsymbol{x}}_k + B_{q} \boldsymbol{u}_k) + L_{q,k} \boldsymbol{y}_{k + 1}, \\
\Sigma_{k+1}^x = & (I_n - L_{q,k} C_{q}) \Sigma_{q, k}^{x,+},
\end{aligned}
\end{equation}
where $\Sigma_{q, k}^{x,+} = A_{q} \Sigma_k^x A_{q}^\intercal  + W_{q} W_{q}^\intercal$ is the a priori estimate covariance, $L_{q,k} = \Sigma_k^{x,+} C_{q}^\intercal (\Sigma_k^z)^{-1}$ is the Kalman gain, and $\Sigma_{q, k}^{z,+} = C_{q} \Sigma_{q, k}^{x,+} C_{q}^\intercal  + V_{q} V_{q}^\intercal$ is the innovation covariance. We assume that the noise function $n_{q}$ can be approximated as $V_{q} = n_{q}(\hat{\boldsymbol{x}}_k)$. %Note that this system for the mean and covariance is deterministic, but represents a random process.

Although System (\ref{eq:beliefsys}) is deterministic (i.e., an a priori mean $\hat{\boldsymbol{x}}_k$ and covariance $\hat{\Sigma}_k^x$, a command $(q, \boldsymbol{u})$ and an observation $\boldsymbol{y}_{k+1}$ define an unique a posteriori mean $\hat{\boldsymbol{x}}_{k+1}$ and covariance $\hat{\Sigma}_{k+1}^x$), the underlying system is probabilistic (i.e., the mean $\hat{\boldsymbol{x}}_k$ and covariance $\hat{\Sigma}_k^x$ represent a Gaussian random variable $\boldsymbol{X}_k$). A belief trajectory $\boldsymbol{\beta}$ is defined as a sequence $\boldsymbol{X}_0 \xrightarrow{q_0, \boldsymbol{u}_0, \boldsymbol{y}_1} \boldsymbol{X}_1 \dots$. A transition $\boldsymbol{X}_k\xrightarrow{q_k, \boldsymbol{u}_k, \boldsymbol{y}_k}\boldsymbol{X}_{k+1}$ represents the process of applying a command $q_k$ and input $\boldsymbol{u}_k$ at instant $k$ and waiting for an observation $\boldsymbol{y}_{k+1}$ at instant $k + 1$ to update the next belief state $\boldsymbol{X}_{k+1}$. We denote a prefix of a trajectory $\boldsymbol{\beta}$ by $\boldsymbol{\beta}_K = Prefix_K(\boldsymbol{\beta}) = \boldsymbol{X}_0 \xrightarrow{q_0, \boldsymbol{u}_0, \boldsymbol{y}_1} \boldsymbol{X}_1 \dots \xrightarrow{q_{K-1}, \boldsymbol{u}_{K-1}, \boldsymbol{y}_K} \boldsymbol{X}_K$.
 
 \begin{example}
     Consider the light-dark domain presented in \cref{ex:light-dark-system}. An example of a trajectory is shown in \cref{fig:light-dark-example-workspace}. The state trajectory (in red) is unknown to the system. The robot can only observe the noisy output (in purple). However, the KF allows us to use these noisy observations to reduce the uncertainty in the belief trajectory (in orange). 
 \end{example}
 
% \begin{remark}
% Note that we assume deterministic system commands $e_k \in E$ and inputs $\boldsymbol{u}_k \in \mathcal{U}$ at each instant $k$, and that the noise function $n_{e_k}$ can be approximated as $V_{e_k} = n_{e_k}(\hat{\boldsymbol{x}}_k)$. Therefore, the system is equivalent to a time variant linear system. 
% \end{remark}

\subsection{Belief Cones}\label{sec:belief_cones}

Polytopes in the original state space correspond to cones in the belief space. These cones $\mathcal{B} \subseteq \mathbb{R}^{n (n + 1)}$ can be characterized as the intersection of a finite set of second order cones in the (Gaussian) parameter space. Intuitively, any Gaussian distribution with a mean $\hat{\boldsymbol{x}} \in \mathbb{R}^n$ and covariance $\Sigma^x \in \mathbb{R}^{n \times n}$ inside this cone (i.e., $(\hat{\boldsymbol{x}}, \Sigma^x) \in \mathcal{B}$) satisfies a set of probabilistic linear predicates (i.e, $\bigwedge_i P(\mu_i(\boldsymbol{x}) \leq 0) \geq 1 - \epsilon_i$). For simplicity, we will denote that a Gaussian random variable $\boldsymbol{X}_k \sim \mathcal{N}(\hat{x}_k, \Sigma_k^x)$ satisfies a probabilistic linear predicate $P(\mu_i(\boldsymbol{x}) \leq 0) \geq 1 - \epsilon_i$ by $\boldsymbol{X}_k \mSat P(\mu_i(\boldsymbol{x}) \leq 0) \geq 1 - \epsilon_i$. Therefore, a belief cone is defined as:
\begin{equation}
\begin{aligned}
     \mathcal{B} := & \{ \boldsymbol{b} \in \mathbb{R}^{n (n+1)}: \bigwedge_i \boldsymbol{X} \mSat P(\boldsymbol{h}_i^\intercal \hat{\boldsymbol{x}}_k + c_i \leq 0) \geq 1 - \epsilon_i\} \\
     := & \cap_i \{ \boldsymbol{h}_i^\intercal \hat{\boldsymbol{x}}_k + c_i + \Phi^{-1}(1 - \epsilon_i) \sqrt{\boldsymbol{h}_i^\intercal \Sigma_k^x \boldsymbol{h}_i}  \leq 0 \},
\end{aligned}
\end{equation}
where $\boldsymbol{b} \in \mathbb{R}^{n (n+1)}$ is the Gaussian distribution parameter variable, $\boldsymbol{h}_i^\intercal \in \mathbb{R}^n$, $c_i \in \mathbb{R}$ and $\epsilon_i \in [0, 0.5]$ are constants, $\Phi(v)$ and  $\Phi^{-1}(p)$ are the cumulative distribution and quantile functions of the standard Gaussian distribution $V \sim \mathcal{N}(0,1)$, i.e., $\Phi(v) = P(V \leq v)$ and $\Phi^{-1}(p) \leq v$ if and only if $p \leq \Phi(v)$. 
%We can easily see that $\boldsymbol{X} \mSat P(\boldsymbol{h}_i^\intercal \hat{\boldsymbol{x}}_k + c_i \leq 0) \geq 1 - \epsilon_i$ if and only if $\boldsymbol{h}_i^\intercal \hat{\boldsymbol{x}}_k + c_i + \Phi^{-1}(1 - \epsilon_i) \sqrt{\boldsymbol{h}_i^\intercal \Sigma_k^x \boldsymbol{h}_i}  \leq 0$ from Gaussian distribution properties such as linear transformation and the quantile function definition.

\subsection{Probabilistic Signal Temporal Logic}\label{sec:prstl}

We specify the requirements of a system belief trajectory using 
\gls*{prstl} formulas. These formulas are defined recursively according to the following grammar:
\begin{align*}
    \phi := & \pi^\mu_\epsilon | \pi^\mathbb{Q} | \phi_1 \mAnd \phi_2 \\
    \mFormula := & \phi | \mFormula_1 \mAnd \mFormula_2 | \mFormula_1 \mOr \mFormula_2 |  \mFormula_1 \mUntil_{[a,b]} \mFormula_2 |  \mAlways_{[a,b]} \mFormula,
\end{align*}
where $\pi$ is a predicate, $\mFormula$, $\mFormula_1$, and $\mFormula_2$ are \gls*{prstl} formulas, and $\phi$, $\phi_1$, and $\phi_2$ are  \gls*{prstl} state formulas. Predicates can be one of two types: atomic and probabilistic. An atomic predicate $\pi^\mathbb{Q}$ is a statement about the system modes (commands) and is defined by a set $\mathbb{Q} \subseteq Q$ of modes. A probabilistic predicate $\pi^\mu_\epsilon$ is a statement about the belief $\boldsymbol{X}_k$ defined by a linear function $\mu : \mathbb{R}^n \rightarrow \mathbb{R}$ and a a tolerance $\epsilon \in [0,0.5]$. The operators $\mAnd, \mOr$ are Boolean operators conjuntion and disjunction. The temporal operators $\mUntil$ and $\mAlways$ stand for until and always. In \gls*{prstl}, these operators are defined by an interval $[a,b] \subseteq \mathbb{N}_{\geq 0}$.

We denote the fact that a belief trajectory $\boldsymbol{\beta}$ satisfies an \gls*{prstl} formula $\mFormula$ with $\boldsymbol{\beta} \mSat \mFormula$. Furthermore, we write $\boldsymbol{\beta} \mSat_k \mFormula$ if the trajectory $\boldsymbol{X}_k \xrightarrow{q_k, \boldsymbol{u}_k, \boldsymbol{y}_{k+1}} \boldsymbol{X}_{k + 1} \dots$ satisfies $\mFormula$.  Formally, the following semantics define the validity of a formula $\mFormula$ with respect to the trajectory $\boldsymbol{\beta}$:
\begin{itemize}
  \item $\boldsymbol{\beta} \mSat_k \pi^\mathbb{Q}$ if and only if $k = 0$ or $q_{k-1} \in \mathbb{Q}$,
  \item $\boldsymbol{\beta} \mSat_k \pi^\mu_\epsilon$ if and only if $p\big(\mu(\boldsymbol{x}_k) \leq 0\big) \geq 1 - \epsilon$,
  \item $\boldsymbol{\beta} \mSat_k \mFormula_1 \mAnd \mFormula_2$ if and only if $\boldsymbol{\beta} \mSat_k \mFormula_1$ and $\boldsymbol{\beta} \mSat_k \mFormula_2$,
  \item $\boldsymbol{\beta} \mSat_k \mFormula_1 \mOr \mFormula_2$ if and only if $\boldsymbol{\beta} \mSat_k \mFormula_1$ or $\boldsymbol{\beta} \mSat_k \mFormula_2$,
    \item $\boldsymbol{\beta} \mSat_{k} \mFormula_1 \mUntil_{[a,b]} \mFormula_2$ if and only if $\exists k^\prime \in [k + a, k + b]$ s.t. $\boldsymbol{\beta} \mSat_{{k^\prime}}\mFormula_2$, and, $\forall k^{\prime\prime} \in [k + a, k^\prime]$, $\boldsymbol{\beta} \mSat_{{k^{\prime\prime}}}\mFormula_1$;
    \item $\boldsymbol{\beta} \mSat_{k} \mAlways_{[a,b]} \mFormula$ if and only if $\forall k^\prime \in [k + a, k + b]$, $\boldsymbol{\beta} \mSat_{{k^\prime}}\mFormula_2$,
  \item $\boldsymbol{\beta} \mSat \mFormula$ if and only if $\boldsymbol{\beta} \mSat_0 \mFormula$,
\end{itemize} 
where the temporal operators are indexed by a delay $a \in \mathbb{N}_{\geq 0}$ and a deadline $b \in \mathbb{N}_{\geq 0}: a < b \leq \infty$. We can derive other operators such as \emph{true} ($\top = \pi^Q$), \emph{false} ($\perp = \pi^\emptyset$), \emph{release} ($\mFormula_1 \mRelease_{[a,b]} \mFormula_2 = (\mFormula_2 \mUntil_{[a,b]} \mFormula_1) \mOr \mAlways_{[a,b]} \mFormula_2$) and eventually ($\mEventually_{[a,b]} \mFormula = \top \mUntil_{[a, b]} \mFormula$).

 \begin{example}\label{ex:light-dark-specs}
     We can specify motion planning problems as \gls*{prstl} formulas. Consider the light-dark example, the specification is:
     \begin{equation}
     \mFormula = safe \boldsymbol{U}_{[0,240]} \mAlways_{[0,10]} target,
     \end{equation}
     where $safe = \pi_{0.01}^{-x_1 - 1} \mAnd \pi_{0.01}^{x_1 - 5} \mAnd \pi_{0.01}^{-x_2 - 1} \mAnd \pi_{0.01}^{x_2 - 4}$ and $target = \pi_{0.05}^{-x_1 - 0.25} \mAnd \pi_{0.05}^{x_1 - 0.25} \mAnd \pi_{0.05}^{-x_2 - 0.25} \mAnd \pi_{0.05}^{x_2 - 0.25}$.
     In plain English, the robot must satisfy each safety boundary with $99\%$ confidence until it achieves each target boundary with $95\%$ confidence within $240$ time instants and stays in the target for $10$ time instants.
 \end{example}

 \begin{example}\label{ex:laser-grasp-specs}
     %We can specify robot manipulation planning problem as \gls*{prstl} formulas. 
     Consider the planar robot manipulator example, the specification is:
     \begin{align*}
        \mFormula = & \mFormula_{step,1}(\mFormula_{step,2}(\mFormula_{step,3}(\mFormula_{step,4}))) \\
        \mFormula_{step,1}(\mFormula^\prime) = & \big((\pi^{\{1\}} \mAnd safe) \mOr (\pi^{\{1\}} \mAnd approach) \\ 
        & \hspace{2cm} \mUntil_{[0, \infty]} (\pi^{\{1\}} \mAnd approach) \mAnd \mFormula^\prime\big) \\
        \mFormula_{step,2}(\mFormula^\prime) = & \big((\pi^{\{1\}} \mAnd approach) \mOr (\pi^{\{2\}} \mAnd grasp)  \\ 
        & \hspace{2cm} \mUntil_{[0, \infty]} (\pi^{\{2\}} \mAnd grasp) \mAnd \mFormula^\prime\big) \\
        \mFormula_{step,3}(\mFormula^\prime) = & \big((\pi^{\{2\}} \mAnd grasp) \mOr (\pi^{\{3\}} \mAnd move)  \\ 
        & \hspace{2cm} \mUntil_{[0, \infty]} (\pi^{\{3\}} \mAnd move) \mAnd \mFormula^\prime\big) \\
        \mFormula_{step,4}(\mFormula^\prime) = & \big((\pi^{\{3\}} \mAnd move) \mOr (\pi^{\{2\}} \mAnd place\_down)  \\ 
        & \hspace{2cm} \mUntil_{[0, \infty]} (\pi^{\{2\}} \mAnd place\_down)\big),
     \end{align*}
     where the propositions are formed by the predicates illustrated in \cref{fig:laser-grasp-specs} with a tolerance of $0.05$. 
     The robot must move the gripper without colliding with the puck until it reaches the grasp configuration and, next, move the puck to the target. 
    \begin{figure}
        \centering
        \begin{subfigure}[b]{0.24\textwidth}
            \centering
            \includegraphics[width=\textwidth]{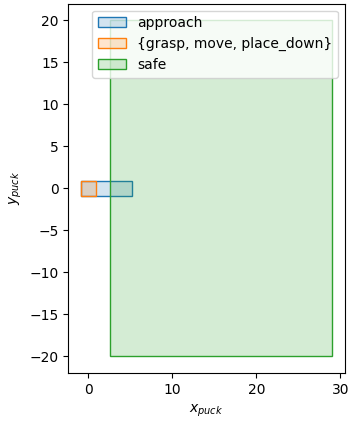}
            \caption{Puck}
        \end{subfigure}
        \begin{subfigure}[b]{0.24\textwidth}
            \centering
            \includegraphics[width=\textwidth]{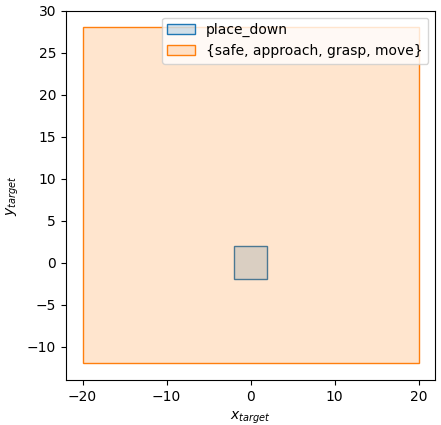}
            \caption{Target}
        \end{subfigure}
        \begin{subfigure}[b]{0.3\textwidth}
            \centering
            \includegraphics[width=\textwidth]{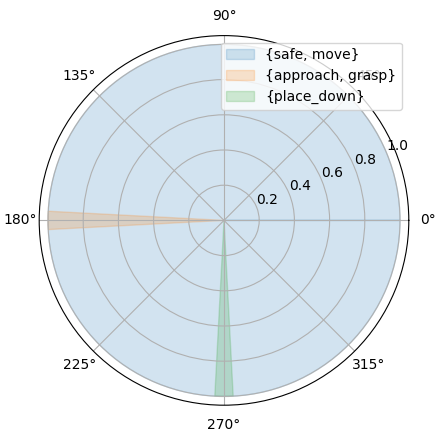}
            \caption{Orientation}
        \end{subfigure}
        \caption{Illustrative representation of the formula in \cref{ex:laser-grasp-specs}.}
        \label{fig:laser-grasp-specs}
    \vspace{-0.5cm}
    \end{figure}
 \end{example}

We denote the closure $cl(\mFormula)$ of a \gls*{prstl} formula $\mFormula$ as the smallest set satisfying $\mFormula_1 \circ \mFormula_2 \in cl(\mFormula)$ (or $\mAlways_{[a,b]}\mFormula^\prime \in cl(\mFormula)$), then $\mFormula_1, \mFormula_2 \in cl(\mFormula)$ (or $\mFormula^\prime \in cl(\mFormula)$), where $\circ \in \{\mOr,\mAnd, \mUntil_{[a,b]} \}$. 

\begin{example}
    Consider the formula $\mFormula$ in \cref{ex:light-dark-specs}. The main formula $\mFormula$ is in the closure $\mFormula \in cl(\mFormula)$. All of its atomic predicates are also in its closure (e.g., $\pi_{0.01}^{-x_1 - 1} \in cl(\mFormula)$). Similarly, any sub-formula in $\mFormula$ is also in its closure (e.g., $\pi_{0.01}^{-x_1 - 1} \mAnd \pi_{0.01}^{x_1 - 5} \in cl(\mFormula)$ and $\mAlways_{[0,10]}  \pi_{0.05}^{-x_1 - 0.25} \mAnd \pi_{0.05}^{x_1 - 0.25} \mAnd \pi_{0.05}^{-x_2 - 0.25} \mAnd \pi_{0.05}^{x_2 - 0.25} \in cl(\mFormula)$).
\end{example}

\subsection{Quantitative Semantics for PrSTL}

We can define a real-valued function $\rho^\mFormula$ of the trajectory $\boldsymbol{\beta}$ at the instant $k$ such that $\rho^\mFormula(\boldsymbol{\beta}, k) \geq 0$ if and only if $\boldsymbol{\beta} \mSat_k \mFormula$. We define such function recursively, as follows:
\begin{itemize}
    \item $\rho^{\pi^\mathbb{Q}}(\boldsymbol{\beta}, k) = \infty$ if $q_k \in \mathbb{Q}$, otherwise, $-\infty$;
    \item $\rho^{\pi_\epsilon^\mu}(\boldsymbol{\beta}, k) = -\boldsymbol{h}^\intercal\hat{\boldsymbol{x}}_k - c -\Phi^{-1}(1-\epsilon)\sqrt{\boldsymbol{h}^\intercal \Sigma_k^x \boldsymbol{h}}$;
    \item $\rho^{\mNot \pi}(\boldsymbol{\beta}, k) = -\rho^{\pi}(\boldsymbol{\beta}, k)$;
    \item $\rho^{\mFormula_1 \mAnd \mFormula_2}(\boldsymbol{\beta}, k) = \min\big(\rho^{\mFormula_1}(\boldsymbol{\beta}, k), \rho^{\mFormula_2}(\boldsymbol{\beta}, k)\big)$;
    \item $\rho^{\mFormula_1 \mOr \mFormula_2}(\boldsymbol{\beta}, k) = \max\big(\rho^{\mFormula_1}(\boldsymbol{\beta}, k), \rho^{\mFormula_2}(\boldsymbol{\beta}, k)\big)$;
    \item $\rho^{\mFormula_1 \mUntil_{[a, b]} \mFormula_2}(\boldsymbol{\beta}, k) = \max_{k^\prime\in[k+a,k+b]}\big( \min(\rho^{\mFormula_2}(\boldsymbol{\beta}, k^\prime), \\ \hspace*{4.5cm} \min_{k^{\prime\prime} \in [k, k^\prime]} \rho^{\mFormula_1}(\boldsymbol{\beta}, k^{\prime\prime})\big)$;
    \item $\rho^{\mAlways_{[a, b]} \mFormula}(\boldsymbol{\beta}, k) = \min_{k^\prime \in [k + a, k + b]} \rho^{\mFormula_2}(\boldsymbol{\beta}, k^\prime)$.
\end{itemize}

We can interpret the value of the function $\rho^\mFormula(\boldsymbol{\beta}, k)$ as \textit{how much} the trajectory $\boldsymbol{\beta}$ satisfies the formula $\mFormula$. The larger $\rho^\mFormula(\boldsymbol{\beta}, k)$ is, the further the trajectory is from violating the specification. 
%Note that $\rho^\mFormula(\boldsymbol{\beta}, 0) \geq 0$ is equivalent to $\boldsymbol{\beta} \mSat \mFormula$.

\subsection{Problem Formulation}

We can now formulate the active perception and motion planning problem for the switched linear control system in terms of a \gls*{prstl} specification. Active perception is inherently built into the framework because we plan in the belief space. In other words, to achieve the specification, the system may need to take actions that decrease uncertainty. 

\begin{problem}\label{prob:pr_1}
Given a switched linear control system, an initial condition $\bar{\boldsymbol{X}} \sim \mathcal{N}(\bar{\boldsymbol{x}}, \bar{\Sigma}^x)$, and a \gls*{prstl} formula $\mFormula$, design a control signal $\boldsymbol{a} = (q_0, \boldsymbol{u}_0)(q_1, \boldsymbol{u}_1)\dots$ that satisfies the formula $\mFormula$.  
\end{problem}

\section{Overall Approach}\label{sec:overall_approach}

Our proposed approach to solve \cref{prob:pr_1} is illustrated in \cref{fig:diag1}. Our basic idea is to construct deterministic abstractions and to use \textit{counterexample-guided synthesis} \cite{alur2013syntax,reynolds2015counterexample} to satisfy both the \gls*{prstl} specification $\mFormula$ and the dynamics of System (\ref{eq:prsystem}). Two interacting layers, discrete and continuous, work together to overcome nonconvexities in the logical specification $\mFormula$ efficiently. At the discrete layer, a discrete planner acts as a \textit{proposer}, generating discrete plans by solving a \gls*{bmc} for the given specification. We use an iterative deepening search to search first for shorter satisfying plans, thus minimizing undue computation. We pass the satisfying discrete plans to the continuous layer, which acts as a \textit{teacher}. In the continuous layer, a sampling-based search is applied to check whether a discrete plan is feasible. If the feasibility test does not pass, we construct a counterexample to discard infeasible trajectories. Then we add this counterexample to the discrete planner and repeat this process until we find a solution or no more satisfying plans exist.

  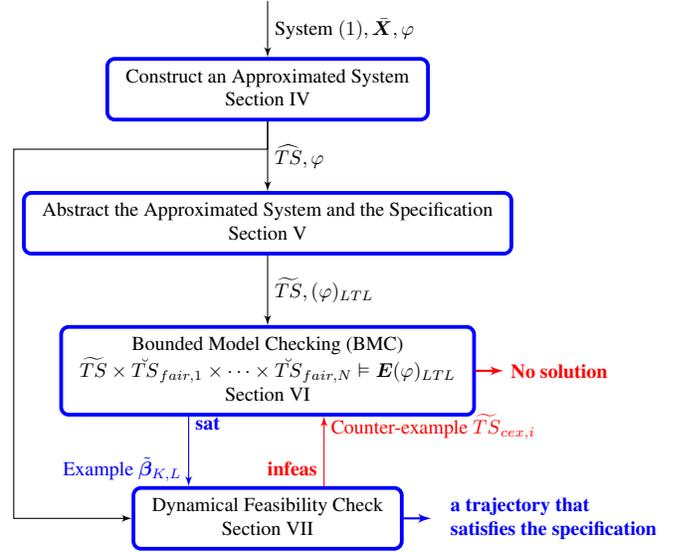
\begin{figure}
    \tikzstyle{block} = [draw, rounded corners=1mm, color=blue, text=black, line width=0.5mm, rectangle, minimum height=3em, minimum width=6em]
    \centering
    \begin{tikzpicture}[auto, >=latex', scale=0.75, transform shape]
        \node[block, minimum width=11em] (det) {\begin{tabular}{c}Construct an Approximated System \\ \cref{sec:det}\end{tabular}};
        \draw[->] ([yshift=1cm] det.north) -- node [pos=0.5, right] {$\text{System } (\ref{eq:prsystem}), \bar{\boldsymbol{X}}, \mFormula$} (det.north);
        \node[block, minimum width=11em,below=1.25cm of det] (abs) {\begin{tabular}{c}Abstract the Approximated System and the Specification \\ \cref{sec:abs}\end{tabular}};
        \draw[->] (det.south) -- node [pos=0.5, right] {$\widehat{TS}, \mFormula$} (abs.north);
        \node[block, minimum width=11em,below=1.25cm of abs] (dplan) {\begin{tabular}{c}Bounded Model Checking (BMC) \\ $\widetilde{TS} \times \breve{TS}_{fair,1} \times \dots \times \breve{TS}_{fair,N} \mSat \boldsymbol{E} (\mFormula)_{LTL}$ \\ \cref{sec:bmc}
        \end{tabular}};
        \draw[->] (abs.south) -- node [pos=0.5, right] {$\widetilde{TS}, (\mFormula)_{LTL}$} (dplan.north);
        \node[block, minimum width=11em,below=1.25cm of dplan] (reachsearch) {\begin{tabular}{c}Dynamical Feasibility Check \\ \cref{sec:feas} \end{tabular} };
        \draw[->] (det.south) |- ++(-4.5cm,-0.5cm) |- (reachsearch);
        \draw[->,color=blue] ([xshift=-14mm] dplan.south) -- node[left,near end] {\color{blue} Example $\tilde{\boldsymbol{\beta}}_{K,L}$} node[right,pos=0.1] {\color{blue} \textbf{sat}} ([xshift=-14mm] reachsearch.north);
        \draw[->,color=red] ([xshift=10mm]reachsearch.north) -- node[right,pos=0.85] {\color{red} Counter-example $\widetilde{TS}_{cex,i}$} node[left,near start] {\color{red} \textbf{infeas}} ([xshift=10mm] dplan.south);
        \draw[->,thick,color=blue] (reachsearch.east) -- node[right, pos=1] {\color{blue} \begin{tabular}{l}
             \textbf{a trajectory that} \\
             \textbf{satisfies the specification} 
        \end{tabular}} ++(0.5cm,0);
        \draw[->,thick,color=red] (dplan.east) -- node[right,pos=1] {\color{red} \textbf{No solution}} ++(0.5cm,0);
    \end{tikzpicture}
    \caption{Pictorial representation of our proposed approach. }
    \label{fig:diag1}
    \vspace{-0.5cm}
 \end{figure}

\section{Approximated Belief Dynamics}\label{sec:det}

Our approach to handle unknown observations during the planning is by approximating the belief system in \cref{eq:beliefsys} using \gls*{mlo}. However, we also need to abstract this approximated system to use \gls*{bmc}. Thus, we show that there is an equivalent transition system that models the resulting belief dynamics.

An initialized transition system $TS$ is defined by a tuple $(S, \bar{s}, Act, \delta)$, where:
\begin{itemize}
    \item $S$ is a set of states;
    \item $\bar{s} \in S$ is the initial state;
    \item $Act$ is a set of actions;
    \item $\delta : S \times Act \rightarrow S$ is a transition relation.
\end{itemize}

For example, a transition system $TS$ that models System \cref{eq:prsystem} is a transition system where: $s \in \mathbb{R}^{n(n+1)}$ is the vector defining the multivariate Gaussian belief state $\boldsymbol{X} \sim \mathcal{N}(\hat{\boldsymbol{x}}, \Sigma^x)$ mean $\hat{\boldsymbol{x}} \in \mathbb{R}^n$ and covariance $\Sigma^x \in \mathbb{R}^{n \times n}$, and $(q_k, \boldsymbol{u}_k, \boldsymbol{y}_{k + 1}) \in Act$ is an action defining  a system command $q_k$, input $\boldsymbol{u}_k$, and output $\boldsymbol{y}_{k + 1}$. Hence, $S \subseteq \mathbb{R}^{n(n+1)}$,  $\bar{s} = \bar{\boldsymbol{X}}$,  $Act \subseteq Q \times \mathcal{U} \times \mathbb{R}^p$, and $\delta$ is defined in \cref{eq:beliefsys}.

\subsection{Maximum Likelihood Observations}

The planning problem involves synthesizing a control signal $(q_0, \boldsymbol{u}_0) (q_1, \boldsymbol{u}_1) \dots$ which results in a belief trajectory $\boldsymbol{\beta}$ that satisfies the specification $\mFormula$. However, the belief dynamics also depend on the observations $\boldsymbol{y}_k$, which are available only after execution. Each observation signal $\boldsymbol{y} = \boldsymbol{y}_0\boldsymbol{y}_1 \dots$ results in a different belief trajectory $\boldsymbol{\beta}$ for the same initial condition and control signal. Therefore, we need to approximate the belief trajectory during the planning.

%Several works approximate the belief trajectory during the planning. Some works focus on choosing a particular observation signal during the planning under some assumption such as \gls*{mlo} \cite{platt2010belief}. Others consider the observations as random variables such as the generalized belief state \cite{indelman2016towards} and covariance steering approach \cite{ridderhof2020chance,zheng2021belief}. However, the approximation is required to include the active perception during the planning, and, in this work, we use the \gls*{mlo}.  

First, note that the observation during the planning is indeed a Gaussian random process: % in System (\ref{eq:prsystem}). However, applying it directly in the belief update turns the system unobservable. As a result, the planning will not do active perception anymore. We can compute the output random process as follows:
\begin{equation}
    \begin{aligned}
    \boldsymbol{Y}_{k+1} \sim  &  \int_{\boldsymbol{x}} P(\boldsymbol{y}_{k+1}|\boldsymbol{x},\boldsymbol{u}_k, q_k)P(\boldsymbol{x}|\boldsymbol{X}_k,\boldsymbol{u}_k, q_k) \\
\sim & \int_{\boldsymbol{x}}  \mathcal{N}\Big(\boldsymbol{y}_{k+1}| C_{q_k}\boldsymbol{x}, V_{q_k}V_{q_k}^\intercal \Big) \\
& \hspace{2.5cm}\mathcal{N}(\boldsymbol{x}|A_{q_k} \hat{\boldsymbol{x}}_k + B_{q_k} \boldsymbol{u}_k,\Sigma_{q_k,k}^{x,+}) \\
\sim & \mathcal{N}(C_{q_k}(A_{q_k} \hat{\boldsymbol{x}}_k + B_{q_k} \boldsymbol{u}_k),\Sigma_{q_k,k}^{z,+}).
\end{aligned}
\end{equation}
However, applying this directly in the belief update turns the system unobservable:
%Applying to \cref{eq:beliefsys}, we obtain:
\begin{equation}\label{eq:beliefsys_withoutperception}
\begin{aligned}
\hat{\boldsymbol{x}}_{k+1} = & (I_n - L_{q_k, k} C_{q_k}) (A_{q_k} \hat{\boldsymbol{x}}_k + B_{q_k} \boldsymbol{u}_k) \\
& \hspace{2cm}+ L_{q_k, k} C_{q_k} (A_{q_k} \hat{\boldsymbol{x}}_k + B_{q_k} \boldsymbol{u}_k) \\
= & A_{q_k} \hat{\boldsymbol{x}}_k + B_{q_k} \boldsymbol{u}_k, \\
\Sigma_{k+1}^x = & (I_n - L_{q_k, k} C_{q_k}) \Sigma_{q_k, k}^{x, +} + L_{q_k, k} \Sigma_{q_k, k}^{z,+} L_{q_k, k}^\intercal \\
 = & {\scriptstyle (I_n - L_{q_k, k} C_{q_k}) \Sigma_{q_k, k}^{x,+} + L_{q_k, k} \Sigma_{q_k, k}^{z+} (\Sigma_{q_k, k}^{z,+})^{-1} C_{q_k} \Sigma_{q_k, k}^{x,+}} \\
= & \Sigma_{q_k, k}^{x,+} = A_{q_k} \Sigma_k^x A_{q_k}^\intercal  + W_{q_k} W_{q_k}^\intercal.
\end{aligned}
\end{equation}
As a result, the uncertainty would never decrease, making active perception impossible. On the other hand, if we take the maximum likelihood observation assumption, the observation is always close to the values that it will turn to during the execution with active perception. 
%Hence, intuitively, if the trajectory with maximum likelihood cannot satisfy the specification, only unexpected events could allow the system to execute the task correctly. Although these events are possible, they are unlikely considering the current belief. Thus, we can use over-the-shelf feedback control techniques such as \gls*{lqr} or \gls*{lqg} \cite{antsaklis2007linear} to adapt while still satisfying the specification. 

\begin{example}
    In \cref{fig:light-dark-example-workspace}, the blue belief trajectory is a trajectory with MLO. The observation for this specific run is shown in purple. Note that the observation is very different from its maximum likelihood. However, we used \gls*{lqr} control to correct the estimated belief state (orange) towards the mean value of the planned belief trajectory. As a result, we kept the estimated belief state close to the planned one. Even though the state (in red) is initially different from the maximum likelihood, this strategy also forced the state trajectory to follow the specifications.
\end{example}

Applying the maximum likelihood observation $\boldsymbol{y}_{ml,k+1} = \argmax_{\boldsymbol{y}_{k+1}} P(\boldsymbol{y}_{k+1}|\boldsymbol{X}_k,\boldsymbol{u}_k, q_k) = C_{q_k}(A_{q_k} \hat{\boldsymbol{x}}_k + B_{q_k} \boldsymbol{u}_k)$ to the belief dynamics, we obtain the following approximated switched nonlinear control system,
\begin{equation}\label{eq:beliefsyssimple}
\begin{aligned}
\hat{\boldsymbol{x}}_{ml,k+1} = &
A_{q_k} \hat{\boldsymbol{x}}_{ml,k} + B_{q_k} \boldsymbol{u}_k, \\
\Sigma_{ml,k+1}^x = & (I_n - L_{q_k, k} C_{q_k}) \Sigma_{q_k, k}^{x,+}.
\end{aligned}
\end{equation}
The covariance update of the Kalman Filter is preserved using this approximation. Thus, minimizing the uncertainty during the planning can also minimize it during the execution.

\begin{example}
    \cref{fig:light-dark-system-maximum-likelihood} shows 100 Monte Carlo simulations for the light-dark example. The red regions are $95\%$ confidence regions of the simulated belief state, while the blue ones are the maximum likelihood trajectory. Note that the simulated trajectories are around the maximum likelihood. We can treat the error between the maximum likelihood belief trajectory and the simulated ones as disturbances. Thus, we could use a tracking strategy to deal with this error during execution.
    \begin{figure}
        \centering
        \includegraphics[width=0.4\textwidth]{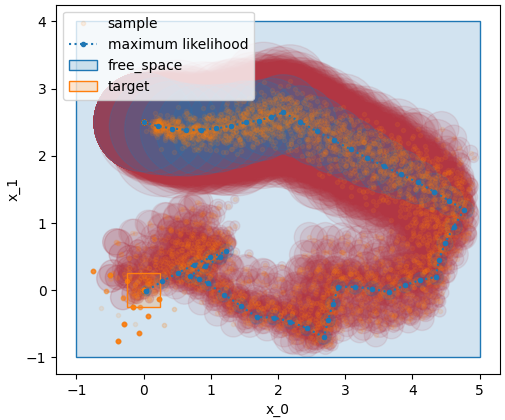}
        \caption{Monte Carlo simulation of 100 executions of a belief trajectory planned for the light-dark domain. The red regions and orange dots represent the sampled belief trajectories. The blue dot and regions represent the approximated belief trajectory.}
        \label{fig:light-dark-system-maximum-likelihood}
        \vspace{-0.5cm}
    \end{figure}
\end{example}

\subsection{Approximated Belief Transition System}

Similarly to the belief system, we can model the approximated belief system as a transition system $\widehat{TS}$. This transition system is a tuple $(\hat{S}, \bar{\hat{s}}, \widehat{Act}, \hat{\delta})$, where $\hat{S} = S$, $\bar{\hat{s}} = \bar{s}$, $\widehat{Act} \subseteq Q \times \mathcal{U}$, and $\hat{\delta}$ is defined in \cref{eq:beliefsyssimple}. Note that the approximated belief transition system $\widehat{TS}$ is a deterministic system that represent an underlying probabilistic switched linear control system. Thus, the approximated belief trajectory $\hat{\boldsymbol{\beta}}$ is a sequence $\boldsymbol{X}_0 \xrightarrow{q_0, \boldsymbol{u}_0} \boldsymbol{X}_1 \dots$.  We now can directly apply the \gls*{prstl} semantics to the approximate belief trajectories because it is independent of the value of observations. 

\section{Simulation Abstraction}\label{sec:abs}

The primary challenge to compute a plan for the system is the non-convex nature of the logical constraints. These non-convexities are, in general, hard to solve for trajectory synthesis approaches. We address this issue by proposing an abstraction that over-approximates the belief system in \cref{eq:beliefsyssimple}. The basic idea is to propose an abstraction that simulates the belief system. Its trajectories define a sequence of adjacent regions in the belief state parameters that guarantee the specifications. A simulation abstraction ensures that there is a solution only if there is a solution for the abstraction. Therefore, with an appropriate abstraction, we can search for all possible solutions more quickly. We define the labeling function $L : \hat{S} \rightarrow \pi_\epsilon^\mu \in cl(\mFormula)$ as the set of atomic probabilistic predicates in the closure of the \gls*{prstl} formula $\mFormula$ that a state $s$ satisfies. 

Now, we can consider the simulation relation between two transition systems \cite{baier2008principles}.
\begin{definition}\label{def:simulation}
    A relation $\mathcal{R} \subseteq \hat{S} \times \tilde{S}$ is a \textit{simulation relation}, i.e., the transition system $\widetilde{TS}$ simulates $\widehat{TS}$, if
    \begin{enumerate}
        \item $(\hat{\bar{s}}, \tilde{\bar{s}} ) \in \mathcal{R}$; \label{en:simulation_cond1}
        \item for every $( \hat{s}, \tilde{s} ) \in \mathcal{R}$ we have $L(\hat{s}) = L(\tilde{s})$; \label{en:simulation_cond2}
        \item for every $( \hat{s}, \tilde{s} ) \in \mathcal{R}$, if there exists $\hat{a} \in Act$ such that $\hat{s}^\prime = \delta(\hat{s}, \hat{a})$, there also exists a $\tilde{a} \in \widetilde{Act}$ such that $\tilde{s}^\prime = \tilde{\delta}(\tilde{s}, \tilde{a})$ and $(\hat{s}^\prime, \tilde{s}^\prime ) \in \mathcal{R}$. \label{en:simulation_cond3}
    \end{enumerate} 
\end{definition}

Since the probabilistic predicates form a second-order cone in the belief output parameter space, we can form a conic set $\mathcal{B} \subseteq \mathbb{R}^{n(n+1)}$ with the second-order cones formed by the set of predicates in the \gls*{prstl} formula. We can form a finite set of conic sets $\mathcal{B}$ with different combinations of probabilistic predicates $\pi_\epsilon^\mu \in cl(\mFormula)$ in the formula closure. As a result, we can build an abstraction with finite states using these partitions and define the transition between two partitions as valid if a control input takes any state in one partition to another. 

\begin{proposition}\label{prop:abs}
Given a \gls*{prstl} formula $\mFormula$, there always exists an abstraction transition system $\widetilde{TS} = (\tilde{S}, \bar{\tilde{s}}, \widetilde{Act}, \tilde{\delta})$ that simulates a transition system $\widehat{TS}$ describing the belief dynamical system \cref{eq:beliefsyssimple}.
\end{proposition}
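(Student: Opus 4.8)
The plan is to prove this constructively: I would explicitly build an abstraction $\widetilde{TS}$ and then verify that it satisfies the three conditions of \cref{def:simulation}. The key insight from the preceding text is that the probabilistic predicates $\pi_\epsilon^\mu \in cl(\mFormula)$ each define a second-order cone in the belief parameter space $\mathbb{R}^{n(n+1)}$, and finitely many such predicates appear in the formula closure. The natural construction is therefore to let the abstract states $\tilde{S}$ be the (finitely many) equivalence classes induced by the labeling function $L$: two belief states get the same abstract state exactly when they satisfy the same set of atomic probabilistic predicates in $cl(\mFormula)$. Concretely, each abstract state corresponds to a conic set $\mathcal{B} \subseteq \mathbb{R}^{n(n+1)}$ formed as an intersection of the relevant second-order cones (and their complements). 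Since there are finitely many predicates, there are at most $2^{|\{\pi_\epsilon^\mu \in cl(\mFormula)\}|}$ such regions, so $\tilde{S}$ is finite.

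Next I would define the remaining components of the tuple. The initial abstract state $\bar{\tilde{s}}$ is the region containing $\bar{\hat{s}} = \bar{\boldsymbol{X}}$, which immediately gives condition~\ref{en:simulation_cond1}. The abstract action set $\widetilde{Act}$ would range over system modes $q \in Q$ together with the abstract transitions between regions, and I would define the abstract transition relation $\tilde{\delta}$ to include a transition from region $\tilde{s}$ to region $\tilde{s}^\prime$ under mode $q$ precisely when there \emph{exists} some concrete belief state in $\tilde{s}$ and some input $\boldsymbol{u} \in \mathcal{U}$ whose image under $\hat{\delta}$ (i.e., the approximated belief dynamics \cref{eq:beliefsyssimple}) lands in $\tilde{s}^\prime$. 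By defining the abstraction to carry exactly the same label as the region it represents, condition~\ref{en:simulation_cond2} holds by construction: the simulation relation $\mathcal{R}$ is just ``$\hat{s}$ lies in the region $\tilde{s}$,'' and every concrete state in a region satisfies the same probabilistic predicates that define the region's label.

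The substantive work is condition~\ref{en:simulation_cond3}, the step matching concrete transitions with abstract ones. Given $(\hat{s}, \tilde{s}) \in \mathcal{R}$ and a concrete action $\hat{a} = (q, \boldsymbol{u})$ producing $\hat{s}^\prime = \hat{\delta}(\hat{s}, \hat{a})$, I must exhibit an abstract action reaching the region containing $\hat{s}^\prime$ while keeping the relation. This follows by \emph{defining} $\tilde{\delta}$ existentially as above: the concrete successor $\hat{s}^\prime$ falls into exactly one region $\tilde{s}^\prime$, and the witnessing pair $(\hat{s}, \boldsymbol{u})$ certifies that the abstract transition $\tilde{s} \to \tilde{s}^\prime$ under $q$ is included in $\tilde{\delta}$; hence $(\hat{s}^\prime, \tilde{s}^\prime) \in \mathcal{R}$. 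Because $\tilde{\delta}$ is built to over-approximate the reachable region-to-region transitions, every concrete move is covered, which is precisely the direction a simulation requires (the abstraction may have extra transitions, but that is permitted).

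\textbf{The main obstacle} I anticipate is ensuring the abstract state space is genuinely finite and well-defined despite the covariance component of the belief state. The mean $\hat{\boldsymbol{x}}$ evolves over an unbounded domain, and the covariance $\Sigma^x$ evolves through the nonlinear Riccati-type update in \cref{eq:beliefsyssimple}; I must argue that what matters for the labeling and hence for $\tilde{S}$ is only \emph{which} cones a belief satisfies, not the continuous coordinates themselves, so that the finite partition by predicate-satisfaction patterns is the correct quotient. A secondary subtlety is that some combinations of predicates may yield empty conic intersections; I would simply discard empty regions, which only shrinks $\tilde{S}$ and does not affect the simulation argument. Verifying that the existentially-defined $\tilde{\delta}$ is nonetheless a legitimate transition relation (total enough to cover all concrete successors) is where the care lies, but since we demand only simulation and not bisimulation, over-approximation suffices and no completeness in the reverse direction is needed.
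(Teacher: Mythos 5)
Your proposal is correct and takes essentially the same route as the paper's own proof: a construction in which abstract states are conic regions of the belief parameter space determined by which predicates of $cl(\mFormula)$ are satisfied, the initial abstract state is the region containing $\bar{\boldsymbol{X}}$, actions are the modes $Q$, and transitions are defined existentially (an abstract transition exists iff some concrete state in the source region and some input $\boldsymbol{u} \in \mathcal{U}$ reach the target region under \cref{eq:beliefsyssimple}), with the three conditions of \cref{def:simulation} then holding by construction. The only difference is cosmetic: you partition by full predicate-satisfaction patterns (at most $2^{|cl(\mFormula)|}$ disjoint regions), while the paper indexes regions by the state formulas in the closure, but the simulation argument is identical.
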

\begin{proof}
We prove by construction. First, we define an abstraction state $\tilde{s} \in \tilde{S}$ for every different conic partition in the parameter space $\tilde{s} = \{ \bigwedge_{\pi_\epsilon^\mu \in \phi} \boldsymbol{X}_k \mSat \pi_\epsilon^\mu \} \subseteq  \mathbb{R}^{n(n+1)}$ for every state formula $\phi \in L(\tilde{s})$ in the closure. Note that for atoms true $\top$ and false $\perp$ we have $\tilde{s} = \mathbb{R}^{n(n+1)}$ and $\tilde{s} = \emptyset$, respectively. This definition enforce the condition \ref{en:simulation_cond2} in \cref{def:simulation}. We let the initial state of the abstraction be the state $\tilde{s}_0 \in \tilde{S}$ that represent the partition that include the initial belief state (i.e., $\hat{s}_0 \in \tilde{s}_0$). As a result, we guarantee the condition \ref{en:simulation_cond1} in \cref{def:simulation}. Now, the set of actions $\widetilde{Act}$ of the abstraction system $\widetilde{TS}$ is the set of system commands (i.e., $\widetilde{Act} = Q$).  Finally, we add transitions $\tilde{s}^\prime = \tilde{\delta}(\tilde{s}, q)$ for every two abstraction states $\tilde{s}, \tilde{s}^\prime \in \tilde{S}$ and action $q \in Q$ if and only if there is a control input $\boldsymbol{u} \in \mathcal{U}$ such that a transition $\hat{s}^\prime = \delta\big(\hat{s}, (q, \boldsymbol{u})\big)$ is valid in the transition system $\widehat{TS}$ and the states are in the partitions $\hat{s} \in \tilde{s}$ and $\hat{s}^\prime \in \tilde{s}^\prime$. Therefore, we satisfy the last condition (i.e., condition \ref{en:simulation_cond3} in \cref{def:simulation}) and guarantee the existence of an abstraction for any transition system $\widehat{TS}$ and \gls*{prstl} formula $\mFormula$.
\end{proof}

 \begin{example}
     Consider the specification of the light-dark domain shown in \cref{ex:light-dark-specs}. This specification generates two conic regions in the belief state parameters: 
     \begin{equation}
        \begin{aligned}
            \tilde{s}_1 = \{  \hat{x}_1-2.3\sigma_{1,1} & \geq -1, &
            \hat{x}_2-2.3\sigma_{2,2} & \geq -1,\\
            \hat{x}_1+2.3\sigma_{1,1} & \leq 5, & 
            \hat{x}_2+2.3\sigma_{2,2} & \leq 4 \}, \\
            \tilde{s}_2 = \{  \hat{x}_1-1.6\sigma_{1,1} & \geq -0.25, &
            \hat{x}_2-1.6\sigma_{2,2} & \geq -0.25, \\
            \hat{x}_1+1.6\sigma_{1,1} & \leq 0.25, &
            \hat{x}_2+1.6\sigma_{2,2} & \leq 0.25 \}.
        \end{aligned}
     \end{equation}
     These belief cones are illustrated in \cref{fig:light-dark-belief-cones} for covariances with $\sigma_{1,1} = \sigma_{2,2}$. As a result, the transition system $\widetilde{TS}$ that simulates the system is defined by two states, $\tilde{S} = \{ \tilde{s}_1, \tilde{s}_2 \}$, one action, $\widetilde{Act} = Q = \{1 \}$, and a transition function $\tilde{\delta}$ such that $\tilde{s}^\prime = \tilde{\delta}(\tilde{s}, \tilde{a})$ iff $\tilde{s}, \tilde{s}^\prime \in \tilde{S}$ and $\tilde{a} \in \widetilde{Act}$.
     \begin{figure}
        \centering
        \includegraphics[width=0.4\textwidth]{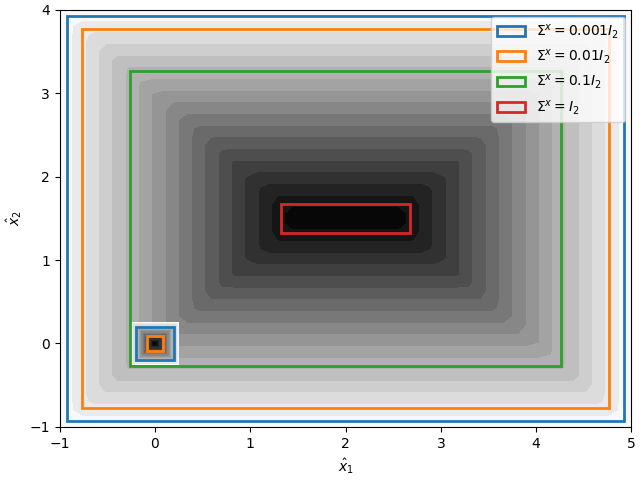}
        \caption{Illustrative representation of belief conic regions $\tilde{s}_1 \subset \mathbb{R}^6$ and $\tilde{s}_2 \subset \mathbb{R}^6$ for specification in \cref{ex:light-dark-specs} projected in the mean space ($(\hat{x}_1, \hat{x}_2) \in \mathbb{R}^2$) at different covariance values. Each covariance forms a valid polytopic region in the mean space. Note that the target region (small box at origin) has no valid mean value for covariances $\Sigma^x = 0.1 I_2$ and $I_2$. }
        \label{fig:light-dark-belief-cones}
     \end{figure}
 \end{example}

\begin{example}
In \cref{ex:laser-grasp-specs}, the state formulas describe five different belief cones: \textit{safe}, \textit{approach}, \textit{grasp}, \textit{move}, and \textit{place down}. These state formulas also include information about the system command/mode. Thus, we know that we are allowed to go to a belief state in the \textit{safe} and \textit{approach} only with system command \textit{opened} (i.e., $q = 1$), \textit{grasp} and \textit{place down} with $q = 2$, and \textit{move} with $q = 3$. We easily see that no valid speed $\|\boldsymbol{v} \|_{\infty} \leq 1$ can move from \textit{safe} to \textit{grasp} because the minimum distance between them is greater than one. Similarly, we cannot change the orientation fast enough (i.e., $|\omega| \leq 45^o$) to switch between \textit{grasp} and \textit{place down}. Therefore, the resulting abstraction is a transition system defined by transitions shown in \cref{tab:laser-grasp-transitions}.
\begin{table}[]
\caption{Abstraction Transitions for the Robot Manipulator}
\label{tab:laser-grasp-transitions}
\begin{tabular}{|lc|c|c|c|}
\hline
\multicolumn{2}{|c|}{\multirow{3}{*}{$\tilde{\delta}$}} & \multicolumn{3}{c|}{System Mode $q$}                                                            \\ \cline{3-5} 
\multicolumn{2}{|c|}{}                          & opened                         & closing/opening                & holding           \\
\multicolumn{2}{|c|}{}                          & 1                              & 2                              & 3                 \\ \hline
safe                  & $\tilde{s}_1$           & $\{\tilde{s}_1, \tilde{s}_2\}$ & $\emptyset$                    & $\emptyset$       \\ \hline
approach              & $\tilde{s}_2$           & $\{\tilde{s}_1, \tilde{s}_2\}$ & $\{\tilde{s}_3\}$              & $\{\tilde{s}_4\}$ \\ \hline
grasp                 & $\tilde{s}_3$           & $\{\tilde{s}_2\}$              & $\{\tilde{s}_3\}$              & $\{\tilde{s}_4\}$ \\ \hline
move                  & $\tilde{s}_4$           & $\{\tilde{s}_2\}$              & $\{\tilde{s}_3, \tilde{s}_5\}$ & $\{\tilde{s}_4\}$ \\ \hline
place down            & $\tilde{s}_5$           & $\{\tilde{s}_2\}$              & $\{\tilde{s}_5\}$              & $\{\tilde{s}_4\}$ \\ \hline
\end{tabular}
\end{table}
\end{example}

\subsection{Formula Transformation}\label{sec:ltltransf}

A deterministic and finite-state transition system allows us to implement many available model checking techniques such as \gls*{bmc} \cite{biere2006linear}. However, these techniques are designed for \gls*{ltl} formulas. Therefore, we need to transform \gls*{prstl} into \gls*{ltl} formulas.

We do this by taking an overapproximation of the delays and deadlines ($a,b$). We do this for several reasons. First, these temporal constraints increase the formula complexity by increasing the closure length. For example, a delay of ten instants would require ten nested next \gls*{ltl} operators. Additionally, later, we use a feasibility search that explores equivalent longer abstraction trajectories. Then, a shorter abstraction trajectory that satisfies the formula without time constraints will also represent a longer one that satisfies them in the feasibility search. For example, a one-step abstraction trajectory can satisfy the formula if we drop the delay of ten instants. We can use the quantitative semantics to check the time constraints during the feasibility search as we search for dynamical feasible belief trajectories. As the approach name suggests, if we can check shorter trajectories, \gls*{bmc} techniques require less computational effort.

Specifically, if we drop the delays and deadlines in the \gls*{prstl} formula, we obtain an \gls*{ltl} formula without the next operator. Thus, we define the following transformation:
\begin{equation}
\begin{aligned}
(\mFormula)_{LTL} = & \mFormula^\prime : \forall \mFormula_1 \circ \mFormula_2 \in cl(\mFormula), \\
& \begin{cases}
\mFormula_1 \circ \mFormula_2 & \textbf{if } \circ \in \{ \mAnd, \mOr \}, \\
\mFormula_1 \mUntil \mFormula_2 & \textbf{if } \circ = \mUntil_{[a,b]}  \textbf{ and } a = 0, \\
\mEventually \mFormula_1 \mUntil \mFormula_2 & \textbf{if } \circ = \mUntil_{[a,b]}  \textbf{ and } a > 0, \\
\mAlways \mFormula & \textbf{if } \circ = \mAlways_{[a,b]} \textbf{ and } a = 0, \\
\mEventually \mAlways \mFormula & \textbf{if } \circ = \mAlways_{[a,b]} \textbf{ and } a > 0,
\end{cases}
\end{aligned}
\end{equation}
where the \gls*{ltl} without next semantics is equal to \gls*{prstl} with all temporal operators having $a = 0$ and $b = \infty$.

Our goal is to guarantee that there is an abstraction trajectory $\tilde{\boldsymbol{\beta}}$ that satisfies the \gls*{prstl} formula $\mFormula$ only if there is one that satisfy the \gls*{ltl} formula $(\mFormula)_{LTL}$. Then we can use standard LTL model checking to propose candidate abstraction trajectories and decide if there are no new abstraction trajectories that are candidates for the problem.

\begin{proposition}\label{prop:ltl_trans}
An abstraction trajectory $\tilde{\boldsymbol{\beta}}$ satisfies a \gls*{prstl} formula $\mFormula$ \textbf{only if} this trajectory satisfies the \gls*{ltl} formula $(\mFormula)_{LTL}$.
\end{proposition}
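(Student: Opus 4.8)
The plan is to prove the implication $\tilde{\boldsymbol{\beta}} \mSat \mFormula \implies \tilde{\boldsymbol{\beta}} \mSat (\mFormula)_{LTL}$ by structural induction on $\mFormula$ over the closure $cl(\mFormula)$, strengthened to every time index. Concretely, I would show that for all $k$, if $\tilde{\boldsymbol{\beta}} \mSat_k \mFormula$ then $\tilde{\boldsymbol{\beta}} \mSat_k (\mFormula)_{LTL}$, and specialize to $k = 0$. The induction hypothesis is that this index-wise implication already holds for the strict subformulas $\mFormula_1, \mFormula_2 \in cl(\mFormula)$; since $(\cdot)_{LTL}$ is defined compositionally on $cl(\mFormula)$, the target $(\mFormula)_{LTL}$ is assembled from the $(\mFormula_i)_{LTL}$ according to the case split of the transformation. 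Throughout I would use that the LTL-without-next semantics is exactly the PrSTL semantics with $a = 0$ and $b = \infty$.

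Base and Boolean cases are immediate: predicates $\pi^\mu_\epsilon$ and $\pi^\mathbb{Q}$ are left untouched by $(\cdot)_{LTL}$, and for $\mFormula_1 \mAnd \mFormula_2$ and $\mFormula_1 \mOr \mFormula_2$ the transformation commutes with the connective while both PrSTL and LTL interpret $\mAnd,\mOr$ pointwise at the same index $k$, so the claim follows from the induction hypothesis applied at $k$. For the until operator the key observation is that dropping the deadline only enlarges the set of admissible witnesses. If $\tilde{\boldsymbol{\beta}} \mSat_k \mFormula_1 \mUntil_{[a,b]} \mFormula_2$, there is a witness $k' \in [k+a,k+b]$ with $\tilde{\boldsymbol{\beta}} \mSat_{k'} \mFormula_2$ and $\tilde{\boldsymbol{\beta}} \mSat_{k''} \mFormula_1$ for all $k'' \in [k+a,k']$; by the induction hypothesis the same $k'$ and interval work for $(\mFormula_2)_{LTL}$ and $(\mFormula_1)_{LTL}$. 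When $a = 0$ this $k' \geq k$ directly witnesses $(\mFormula_1)_{LTL} \mUntil (\mFormula_2)_{LTL}$ at $k$; when $a > 0$ I would evaluate the same until at index $k+a \geq k$, where the witness still lies in $[k+a,\infty)$, and wrap it with $\mEventually$ to match the target $\mEventually\big((\mFormula_1)_{LTL} \mUntil (\mFormula_2)_{LTL}\big)$.

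The main obstacle is the always case, which is genuinely asymmetric to until. When $b = \infty$ the operators $\mAlways_{[0,\infty]}$ and LTL $\mAlways$ coincide, so that subcase is an identity and immediate; the difficulty is isolated entirely to finite $b$. There, $\mAlways_{[0,b]}\mFormula$ guarantees $\mFormula$ only on the window $[k,k+b]$, whereas the target $\mAlways(\mFormula)_{LTL}$ demands $(\mFormula)_{LTL}$ on all of $[k,\infty)$, a strict superset of indices. The induction hypothesis covers only the bounded window, so it alone cannot certify the unbounded tail, and this is precisely the step where one must take care to keep the conclusion about the \emph{same} trajectory $\tilde{\boldsymbol{\beta}}$ rather than some extended run.

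To close this gap I would establish an invariance lemma for belief cones and invoke it on $\tilde{\boldsymbol{\beta}}$ directly. Each abstraction state is a cone on which the label $L(\tilde{s})$ is constant, and by \cref{prop:abs} a cone that the run enters and holds admits the self-transition keeping the belief inside it; the plan is to argue that along the infinite abstraction run $\tilde{\boldsymbol{\beta}}$ the states visited after time $k$ do not leave the $\mFormula$-labeled cones, so that satisfaction of $(\mFormula)_{LTL}$ propagates from $[k,k+b]$ to all of $[k,\infty)$. For the $a>0$ subcase I would first use $\mEventually$ to advance to index $k+a$ and then apply the same invariance on $[k+a,\infty)$ to obtain the persistent inner $\mAlways$, matching $\mEventually\mAlways(\mFormula)_{LTL}$. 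I expect this invariance argument to be the technical heart of the proof: showing that a satisfying run does not abandon the satisfying cone is exactly where the over-approximating, stutter-closed, finite-state structure of the abstraction must be used, and it is the one place where the bounded-to-unbounded extension is not a routine consequence of the induction hypothesis.
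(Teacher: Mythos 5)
Your base, Boolean, and until cases are sound, and in fact more careful than the paper's own proof: the paper argues by a short contradiction, asserting flatly that any trajectory satisfying a temporal operator with $a > 0$ or $b < \infty$ must also satisfy it with $a = 0$ and $b = \infty$; your treatment of $\mUntil_{[a,b]}$ with $a > 0$ via the $\mEventually$ wrapper (exploiting that the $\forall$-range in the PrSTL semantics begins at $k+a$, not $k$) makes precise exactly what that blanket claim requires. The genuine gap is in the one place you yourself flag: the invariance lemma you propose for $\mAlways_{[a,b]}$ with finite $b$ is false. Satisfaction of $\mAlways_{[0,b]}\mFormula'$ at index $k$ constrains the run only on the window $[k, k+b]$, and nothing prevents the given run from leaving the $\mFormula'$-labeled cones at $k+b+1$. \cref{prop:abs} guarantees that the abstraction \emph{possesses} a transition (possibly a self-loop) wherever the concrete dynamics admit one; it says nothing about which transitions the \emph{fixed} trajectory $\tilde{\boldsymbol{\beta}}$ actually takes, and since the proposition asserts the implication for that same trajectory, you cannot swap in a modified run that elects the self-loop. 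Concretely, for the light-dark specification $safe\,\mUntil_{[0,240]}\,\mAlways_{[0,10]}\,target$, an abstraction run that sits in the $target$ cone for exactly eleven steps and then departs satisfies the PrSTL formula but violates $(\mFormula)_{LTL} = safe\,\mUntil\,\mAlways\, target$, so no invariance argument about satisfying runs can close this case.

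For comparison: the paper's proof does not resolve this either. Its monotonicity assertion silently includes the bounded-to-unbounded step for $\mAlways$ that you correctly identified as non-routine, and for finite deadlines that step is exactly where the assertion fails; so your diagnosis of where the difficulty lives is sharper than the paper's treatment, even though your proposed repair does not work as a lemma. A per-trajectory proof of the statement as written is not available; it could only be salvaged by weakening the claim, e.g., to the existence of \emph{some} $(\mFormula)_{LTL}$-satisfying abstraction trajectory whenever a PrSTL-satisfying one exists (and even that needs the abstraction to admit a self-loop on the relevant cone), or by restricting the fragment so that every $\mAlways$ carries $b = \infty$, in which case your induction goes through with the until cases exactly as you wrote them.
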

\begin{proof}
We prove by contradiction. Assume that an abstraction trajectory $\tilde{\boldsymbol{\beta}}$ satisfies a \gls*{prstl} formula $\mFormula$ but does not satisfy the \gls*{ltl} formula $(\mFormula)_{LTL}$. By definition, the \gls*{ltl} semantics is equal to \gls*{prstl} with all temporal operators having $a = 0$ and $b = \infty$. From \gls*{prstl} semantics, if there is a trajectory that satisfies the \gls*{prstl} formula $\mFormula$ with $a > 0$ or $b < \infty$, it must satisfy with $a = 0$ and $b = \infty$, which contradicts the assumption. Therefore, the proposition follows.
\end{proof}

 \begin{example}
     We can easily abstract the formula that specifies the light-dark domain task (\cref{ex:light-dark-specs}) by just removing the time constraints in the until and always operators:
    \begin{equation}
     (safe \boldsymbol{U}_{[0,240]} \mAlways_{[0,10]} target)_{LTL} = safe \boldsymbol{U} \mAlways target,
     \end{equation}
     where $safe = \pi_{0.01}^{-x_1 - 1} \mAnd \pi_{0.01}^{x_1 - 5} \mAnd \pi_{0.01}^{-x_2 - 1} \mAnd \pi_{0.01}^{x_2 - 4}$ and $target = \pi_{0.05}^{-x_1 - 0.25} \mAnd \pi_{0.05}^{x_1 - 0.25} \mAnd \pi_{0.05}^{-x_2 - 0.25} \mAnd \pi_{0.05}^{x_2 - 0.25}$.     
    \end{example}

\section{Bounded Model Checking}\label{sec:bmc}

Now, we leverage the fact that the proposed abstraction $\widetilde{TS}$ is a finite-state transition system to propose an efficient method to check if it eventually satisfies the specifications. One prominent formal verification technique for these transition systems is the existential \gls*{smt} encoding of \gls*{ltl} \gls*{bmc} presented in \cite{rodriguesdasilva2021automatic} based on the work in \cite{biere2006linear}. As we showed in \cref{sec:ltltransf}, we can abstract  \gls*{prstl} into \gls*{ltl} formulas. Therefore, we can implement an \gls*{ltl} \gls*{bmc} to verify the existence of abstraction trajectories that satisfy the transformed formula. 

As its name suggests, a \gls*{bmc} encoding checks bounded trajectories. We can represent the infinite behavior of a linear temporal logic formula with a bounded trajectory with a loop (i.e., $(K, L)$-trajectories). In \cref{fig:bmc_flowchart}, we illustrate this algorithm as a flowchart. The algorithm starts with one-step trajectories and increases the trajectory bound if no trajectory is satisfiable. This search ends without a solution when, for a bound $K$, there is no satisfying loop-free trajectory. Intuitively, if a bad state is reachable, it is reachable via a trajectory with no duplicate states, i.e., a loop-free trajectory. The soundness and completeness of the encoding used in this paper are proved in \cite[Proposition 3]{rodriguesdasilva2021automatic} and \cite[Proposition 4]{rodriguesdasilva2021automatic}. We refer the interested reader to \cite{rodriguesdasilva2021automatic} for further details.

  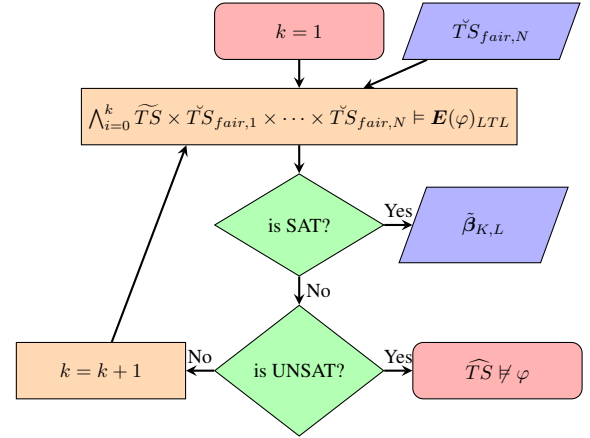
\begin{figure}
    \tikzstyle{startstop} = [rectangle, rounded corners, minimum width=3cm, minimum height=1cm,text centered, draw=black, fill=red!30]
    \tikzstyle{io} = [trapezium, trapezium left angle=70, trapezium right angle=110, minimum width=3cm, minimum height=1cm, text centered, draw=black, fill=blue!30]
    \tikzstyle{process} = [rectangle, minimum width=3cm, minimum height=1cm, text centered, draw=black, fill=orange!30]
    \tikzstyle{decision} = [diamond, minimum width=3cm, minimum height=1cm, text centered, draw=black, fill=green!30]
    \tikzstyle{arrow} = [thick,->,>=stealth]
    \centering
    \begin{tikzpicture}[auto, >=latex', scale=0.75, transform shape]
        \node[startstop] (ini) {$k=1$};
        \node[process, below=0.5cm of ini] (assert) {$\bigwedge_{i=0}^k\widetilde{TS} \times \breve{TS}_{fair, 1} \times \dots \times \breve{TS}_{fair, N} \mSat \boldsymbol{E} (\mFormula)_{LTL}$};
        \node[decision, below=0.5cm of assert] (issat) {is SAT?};
        \node[decision, below=0.5cm of issat] (isunsat) {is UNSAT?};
        \node[process, left=0.5cm of isunsat] (increment) {$k = k + 1$};
        \node[io, right=0.5cm of issat] (rho) {$\tilde{\boldsymbol{\beta}}_{K,L}$};
        \node[startstop, right=0.5cm of isunsat] (exit) {$\widehat{TS} \not\mSat \mFormula$};
        \node[io, right=0.5cm of ini] (cex) {$\breve{TS}_{fair, N}$};
        \draw[arrow] (ini) -- (assert);
        \draw[arrow] (cex) -- (assert);
        \draw[arrow] (assert) -- (issat);
        \draw[arrow] (issat) -- node[above] {Yes} (rho);
        \draw[arrow] (issat) -- node[right] {No} (isunsat);
        \draw[arrow] (isunsat) -- node[above] {Yes} (exit);
        \draw[arrow] (isunsat) -- node[above] {No} (increment);
        \draw[arrow] (increment) -- ([xshift=-2cm] assert.south);
    \end{tikzpicture}
    \caption{Flowchart representation of the \gls*{bmc} algorithm for \gls*{cegis} presented in \cite{rodriguesdasilva2021automatic}. }
    \label{fig:bmc_flowchart}
    \vspace{-0.5cm}
 \end{figure}   
 
 As we show later, the counterexample represents the candidate trajectories (or prefixes) as transition systems. This transition system discard not only the satisfying trajectory $\tilde{\boldsymbol{\beta}}_{K,L}$ (or its prefix) but also all extended trajectories generated by loops inside the trajectory. Therefore, the search is complete and finishes in a finite time.

 \begin{example}\label{ex:light-dark-bmc}
     In the light-dark domain, the first abstraction trajectory that the proposed \gls*{bmc} returns is always $\tilde{\boldsymbol{\beta}}_{2,2} = \tilde{s}_1 \xrightarrow{1} \tilde{s}_2 (\xrightarrow{1} \tilde{s}_2)^\omega$, where $\tilde{s}_1$ and $\tilde{s}_2$ are illustrated in \cref{fig:light-dark-belief-cones}. 
 \end{example}

\section{Dynamical Feasibility Check}\label{sec:feas}

Although the proposed simulation abstraction eases the computation of the existential verification, it is not enough to ensure that the abstraction $\widetilde{TS}$ existentially satisfies the specification. As discussed above, the satisfaction $\tilde{\boldsymbol{\beta}}_{K,L} \mSat (\mFormula)_{LTL}$ of the \gls*{ltl} transformation $(\mFormula)_{LTL}$ does not ensure that there exists a belief trajectory $\hat{\boldsymbol{\beta}}_{K,L}$ that satisfies the \gls*{prstl} formula $\mFormula$. A naive way to handle this fact is to check every abstraction trajectory $\tilde{\boldsymbol{\beta}}_{K,L}$ for the existence of a trajectory $\hat{\boldsymbol{\beta}}_{K,L}$ with same bound $K$ and loop $L$ and discard them if there is not. However, this approach would require many calls to the \gls*{bmc} solver and causing additional computational effort. Therefore, we propose another abstraction that generates \gls*{ltl} equivalent abstraction trajectories without calling the \gls*{bmc} solver, reducing the overall computational effort. 

We propose a variant of SPARSE-RRT \cite{littlefield2013efficient} for active perception that guarantees \gls*{ltl} equivalence. Intuitively, the abstraction trajectory creates a tunnel inside the belief state parameters (i.e., mean and covariance) that simulates the satisfaction of the specification. Our RRT variant searches for belief trajectories inside this tunnel by driving the mean values toward the end (towards achieving the task) and minimizing the uncertainty (active perception).

We need an RRT variant for active perception because sampling-based methods for deterministic systems do not solve active perception efficiently. There are several reasons for this. First, the belief dynamics in \cref{eq:beliefsyssimple} increase the system dimension exponentially (i.e., $n(n + 1)$), meaning that the problem is more complicated to solve even for simple systems. Second, the task and active perception planning is a dual objective problem not addressed in typical sampling-based methods. The goal cannot only reach the desired target (or targets) position. The uncertainty is not controllable. Thus, inputs that locally drive the trajectory towards the target are often not the inputs that will drive globally. Another reason is that over-the-shelf sampling-based techniques do not consider trajectories with a loop. Therefore, we propose modifications to RRT that take these characteristics into account. 

We also proposed an RRT variant for active perception in \cite{da2019active}. However, there are critical differences between the SPARSE-RRT variant proposed in this work and the one proposed in our previous work. First, in \cite{da2019active}, we maximized a robustness value to search for dynamically feasible belief trajectories. However, if two trajectories end with mean values close to each other, the one with less uncertainty is more likely to be a prefix of a satisfying trajectory than the one that is more robust. Another difference is in the computation of the control signal. In this work, we propose a \gls*{lp} problem to include the tunnel constraints during the control synthesis. This \gls*{lp} problem is inspired by our recent works for deterministic systems \cite{rodriguesdasilva2021automatic,da2021symbolic}.

\subsection{Stable Sparse Rapidly-Exploring Random Trees}

Now, we detail the main steps of the proposed SPARSE-RRT variant. This approach is illustrated in \cref{alg:feas}. An feasibility tree $(\mathbb{V}, \mathbb{E})$ is a directed graph, where $\mathbb{V}$ is a set of nodes, and $\mathbb{E}$ is a labeled set of edges. Each edge is labed with a tuple $(q, \boldsymbol{u})$, where $q$ is a system mode (command) and $\boldsymbol{u}$ is a system input. Each node is also a tuple $(\tilde{k}, \hat{s})$, where $\hat{s} \in \hat{S}$ is an approximated belief state such that is simulated by an instant $\tilde{k}$ in the satisfying abstraction trajectory (i.e., $(\tilde{s}_{\tilde{k}}, \hat{s}) \in \mathcal{R}$ and $\tilde{\boldsymbol{\beta}}_{K,L} = \tilde{s}_0\dots\tilde{s}_{\tilde{k}} \dots \tilde{s}_K$ such that $\tilde{\boldsymbol{\beta}}_{K,L} \vDash (\mFormula)_{LTL}$). As a result, a path of this tree represents both a prefix of an abstraction and an approximated belief trajectory.

\begin{algorithm}
	\caption{$feasibility\_search$}\label{alg:feas}
    \begin{algorithmic}[1]
		\REQUIRE $\tilde{\boldsymbol{\beta}}_{K,L}, \bar{\boldsymbol{X}}, \mathcal{U}, N, \Delta_{drain}, \Delta_{near}, bias, h_{lb}, h_{ub}$
		    \STATE $\hat{\boldsymbol{\beta}}^*_{K^*, L^*} = \emptyset$, $\mathbb{V}_{active} = \emptyset$, $\mathbb{E} = \emptyset$;
		    \STATE $\mathbb{V}_{active} \gets (0, \bar{s})$
			\FOR{$j = 0$ \TO $N$} 	
			    \STATE $v_{rand} = random\_sample(\mathbb{V}_{active})$
			    \STATE $\boldsymbol{X}_{rand} \sim \mathcal{N}\Big(unif\_sample\big(\mathcal{P}(v_{rand}.\tilde{s})\big), 0\Big)$
			    \STATE $v_{near} \gets best\_nearest(\boldsymbol{X}_{rand}, \mathbb{V}_{active}, \Delta_{near})$
                \STATE $v_{new} \gets propagate( v_{near}, \mathbb{E}, bias, h_{lb}, h_{ub}, \tilde{\boldsymbol{\beta}}_{K,L}, \mathcal{U})$
                \IF{$\rho^{\mFormula}(\hat{\boldsymbol{\beta}}_{K_{new}, L_{new}}^{v_{new}}, 0) \geq 0$}                    \IF{$\rho^{\mFormula}(\hat{\boldsymbol{\beta}}_{K_{new}, L_{new}}^{v_{new}}, 0) > \rho^{\mFormula}(\hat{\boldsymbol{\beta}}^*_{K^*, L^*}, 0)$}
                        \STATE $\hat{\boldsymbol{\beta}}^*_{K^*, L^*} = \hat{\boldsymbol{\beta}}_{K_{new}, L_{new}}^{v_{new}}$
                    \ENDIF
                \ELSE
		            \STATE $\mathbb{V}_{active} \gets  v_{new}$
		            \STATE $drain(\mathbb{V}_{active}, v_{new}, \Delta_{drain})$
		        \ENDIF
			\ENDFOR
			\IF{$\hat{\boldsymbol{\beta}}^*_{K^*, L^*} = \emptyset$}
			    \STATE $i = \max_{v_j \in \mathcal{T}.\mathbb{V}_{active}} \max_{k \in [0, K]} k$ s.t. $\tilde{s}_{K_j}^{v_j} = \tilde{s}_k$	
			    \IF{$i < K$}
			        \RETURN $\breve{TS}_{fair}(Prefix_i(\tilde{\boldsymbol{\beta}}_{K,L}))$
			     \ELSE
			        \RETURN $\breve{TS}_{fair}(\tilde{\boldsymbol{\beta}}_{K,L})$
		        \ENDIF
		    \ELSE
		        \RETURN $\hat{\boldsymbol{\beta}}^*_{K^*, L^*}$
		    \ENDIF
\end{algorithmic}
\end{algorithm}

We only keep track of a sparse tree formed by the active nodes $\mathbb{V}_{active}$. We start this tree with the root (line 2). Then we start the main loop. In summary, the loop procedure has four main steps. First, we randomly sample a node giving more chances to less uncertain ones (lines 4-6). Note that $\mathcal{P}(\tilde{s}) = \cup_{\forall \pi_\epsilon^{\boldsymbol{h}^\intercal \boldsymbol{x} + c} \in L(\tilde{s})} \{\boldsymbol{h}^\intercal \hat{\boldsymbol{x}}_h + c  \leq 0\}  \subseteq \mathbb{R}^n$ is a polytopic region in the mean space for the abstraction state $\tilde{s}$. Second, we propagate this node (line 7), synthesizing a finite horizon control. Later, since we dropped the time constraints from the specification to use \gls*{bmc}, we check if this new node satisfies the \gls*{prstl} formula and save the one with the larger robustness (line 8-10). Finally, if the node does not satisfy the specification, we add it to the active nodes and drain near nodes with larger uncertainty to keep the active nodes sparse. We repeat this loop for $N$ times. If no solution is found, we return a counterexample $\breve{TS}_{fair}$ instead (lines 14-21).

In the following section, we explore this procedure in more detail.

\subsection{Metrics}

Our aim is to define the distance and cost functions to maximize the exploration of the state space and minimize uncertainty. We can explore the state space by guiding the mean values of the belief state. Thus, we define the distance $m(v, \boldsymbol{X})$ between a node $v = (\tilde{k}, \boldsymbol{X} \sim \mathcal{N}(\hat{\boldsymbol{x}}, \Sigma^x))$ and a belief state $\boldsymbol{X} \sim \mathcal{N}(\hat{\boldsymbol{x}}, \Sigma^x)$ as the Euclidean distance between the mean vector:
\begin{equation}\label{eq:feas:meas}
    m(v, \boldsymbol{X}) := \| v.\hat{\boldsymbol{x}} - \hat{\boldsymbol{x}} \|_2.
\end{equation}

As a result, we search a set of nodes to find those that are near or a single node that is the nearest as follows:
\begin{align}
    near(\mathbb{V}, \boldsymbol{X}, \Delta) = & \{ v \in \mathbb{V} : m(v, \boldsymbol{X}) \leq \Delta \}, \\
    nearest(\mathbb{V}, \boldsymbol{X}) = & \argmin_{v \in \mathbb{V}} m(v, \boldsymbol{X}).
\end{align}

Since the distance focuses on exploring the state space, the cost function focuses on uncertainty minimization. Note that the uncertainty constraint can be heterogeneous in the chance constraints (i.e., the tolerance can be different for different atomic predicates). Hence, we measure the current uncertainty using the atomic predicates. We can measure the uncertainty with respect to an atomic predicate $\pi_\epsilon^{\boldsymbol{h}^\intercal \boldsymbol{x} + c}$ at an instant $k$ using the expression $\sqrt{\boldsymbol{h}^\intercal \Sigma_k^x \boldsymbol{h}}$. In other words, the larger the uncertainty for a predicate $\pi_\epsilon^{\boldsymbol{h}^\intercal \boldsymbol{x} + c}$, the larger the measure $\sqrt{\boldsymbol{h}^\intercal \Sigma_k^x \boldsymbol{h}}$ is. Hence, to measure the uncertainty of a node $v$, we take into consideration the latest abstraction and belief states associated with this node:
\begin{equation}\label{eq:feas:cost}
    \begin{aligned}
        c(v) = & \max_{\forall \pi_\epsilon^{\boldsymbol{h}^\intercal \boldsymbol{x} + c} \in L(\tilde{s}_{v.\tilde{k}})}  \Phi^{-1}(1 - \epsilon) \sqrt{\boldsymbol{h}^\intercal v.\Sigma^{x} \boldsymbol{h}}.
    \end{aligned}
\end{equation}
This cost is not additive as required in the original SPARSE-RRT. The reason is that we are not minimizing the typical goal of reaching a target here. Thus, a longer trajectory is not intrinsically worst for the active perception goal.

\begin{example}
    Consider the sampling strategy in the light-dark domain illustrated in \cref{fig:light-dark-example-bestnearest} with near distance $\Delta_{near} = 1$. First, we randomly select the index $i_{rand} = 0$ and the mean $\hat{\boldsymbol{x}}_{rand} = [1.5, 1]^\intercal$, the red star. In this case, the best nearest node is the one associated with the trajectory in blue. Next, if we select the $\hat{\boldsymbol{x}}_{rand} = [3, 1]^\intercal$, no node is withing the near distance. Thus, the best nearest is the node associated with trajectory that finishes nearest to this position, i.e., the orange trajectory.

    \begin{figure}
        \centering
        \includegraphics[width=0.9\linewidth]{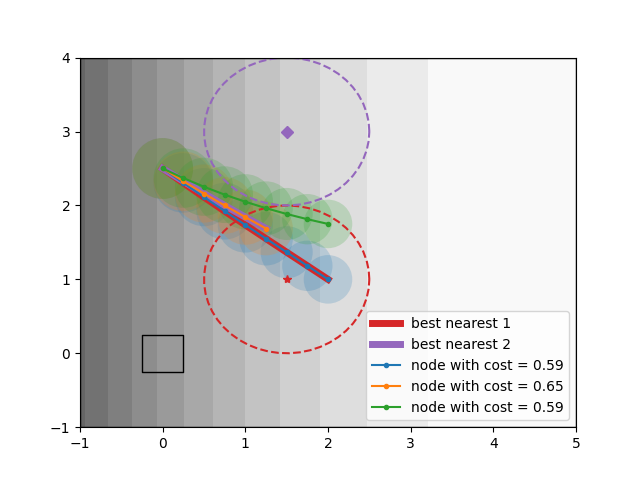}
        \caption{Illustration of the random sampling strategy in the light-dark domain.}
        \label{fig:light-dark-example-bestnearest}
        \vspace{-0.5cm}
    \end{figure}
\end{example}

\begin{example}
    Consider the new node generate by propagating the the blue trajectory in \cref{fig:light-dark-example-bestnearest}. If the drain distance $\Delta_{drain} = 0.5$, the node associated with the green trajectory is drained, as shown in \cref{fig:light-dark-example-drain}.
 
    \begin{figure}
        \centering
        \includegraphics[width=0.9\linewidth]{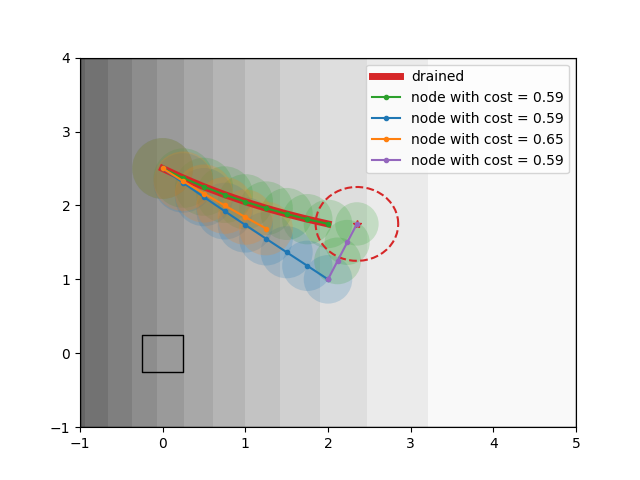}
        \caption{Illustration of the propagation and drain strategy in the light-dark domain.}
        \label{fig:light-dark-example-drain}
        \vspace{-0.5cm}
    \end{figure}
\end{example}

\subsection{Propagate}

The propagation algorithm is responsible for generating nodes from the best nearest nodes. The basic idea is a two-step synthesis. First, we use only the belief mean dynamics, which is linear, in a \gls*{lp} problem. Next, we check for controls that drive the belief inside the target tunnel. Therefore, we explore the state space by driving the mean of the belief state. 

The propagation can be a random or a biased walk. In the random walk, we drive the mean towards a random target $\check{\hat{\boldsymbol{x}}} = unif\_sample\big(\mathcal{P}(\tilde{s}_{v.\tilde{k}})\big) \in \mathbb{R}^n$. This step allows us to explore the belief space for actions that reduce uncertainty (active perception). In the biased walk, we drive the belief mean towards a region $\mathcal{P}(\tilde{s}_{v.\tilde{k}+1})$ in the next step in the tunnel. Hence, we search for actions towards achieving the task. 

The control synthesis for a random walk is the following \gls*{lp} problem:
\begin{equation}\label{eq:propagate:lp}
\begin{aligned}
     \min\limits_{\substack{\hat{\boldsymbol{x}}_0,\dots, \hat{\boldsymbol{x}}_h \in \mathbb{R}^n, \\
     \lambda_0, \dots, \lambda_h \in \mathbb{R}, \\
     \boldsymbol{u}_0, \dots, \boldsymbol{u}_{h-1} \in \mathbb{R}^m}} \hspace{0.1cm} &  \lambda_h \\
     \mathrm{s.t. } & \hat{\boldsymbol{x}}_0 = v_{near}.\hat{\boldsymbol{x}}, \| \check{\hat{\boldsymbol{x}}} - \hat{\boldsymbol{x}}_h \|_{\infty} \leq \lambda_h, \\
    & \hat{\boldsymbol{x}}_{k} = A_{q_{v.\tilde{k}- 1}} \hat{\boldsymbol{x}}_{k-1} + B_{q_{v.\tilde{k}- 1}} \boldsymbol{u}_{k-1},  \\
    &  \boldsymbol{h}^\intercal \hat{\boldsymbol{x}}_{k-1} + c  \leq \| \boldsymbol{h} \| \lambda_k, \forall \pi_\epsilon^{\boldsymbol{h}^\intercal \boldsymbol{x} + c} \in L(\tilde{s}_{v.\tilde{k}}),  \\
    & \frac{\bar{\lambda}}{\delta} \lambda_{k-1} \leq \lambda_k, \boldsymbol{u}_{k-1} \in \mathcal{U}, \\
    & k = 1, \dots, h.
\end{aligned}
\end{equation}
Intuitively, this problem propagates a parent node $v_{near} \in \mathbb{V}_{active}$ belief mean $v_{near}.\hat{\boldsymbol{x}}$ towards a target belief mean $\check{\hat{\boldsymbol{x}}} \in \mathbb{R}^n$ for a finite horizon $h \in [h_{lp}, h_{ub}]$ that is randomly sampled. This problem minimizes the Chebyshev distance to the target belief mean (i.e., $\| \check{\hat{\boldsymbol{x}}} - \hat{\boldsymbol{x}}_{h} \|_{\infty}$) while satisfying the trajectory segment $(\xrightarrow{q_{v.\tilde{k} - 1}} \tilde{s}_{v.\tilde{k} })^{h-1}$.

We use a slack technique to ensure that it is always feasible and returns the largest dynamically feasible prefix. In other words, if it is not dynamically feasible to generate a trajectory segment $(\xrightarrow{q_{v.\tilde{k} - 1}} \tilde{s}_{v.\tilde{k} })^{h-1}$, the first slack $\lambda_k$ with value greater than the tolerance $\delta$ give us the largest dynamically feasible horizon $h^\ast = k - 1$. The reason is that $\bar{\lambda}$ is maximum distance of a point to a half-space defined by the predicates:
\begin{equation}
    \begin{aligned}
         \max_{\hat{\boldsymbol{x}} \in \mathbb{R}^n} \max_{\pi_{\epsilon^\prime}^{\boldsymbol{h}^{\prime\intercal} \boldsymbol{x} + c^\prime} \in L(\tilde{s}_{v.\tilde{k}})} & \boldsymbol{h}^{\prime\intercal} \hat{\boldsymbol{x}} + c^\prime \\
         \mathrm{s.t. } & \boldsymbol{h}^\intercal \hat{\boldsymbol{x}} + c \leq 0, \forall \pi_\epsilon^{\boldsymbol{h}^\intercal \boldsymbol{x} + c} \in L(\tilde{s}_{v.\tilde{k}}).
    \end{aligned}
\end{equation}
Thus, the constraint $\frac{\bar{\lambda}}{\delta} \lambda_k \leq \lambda_{k+1}$ ensure that if there is no control signal that satisfy the prefix $(\xrightarrow{q_{v.\tilde{k} - 1}} \tilde{s}_{v.\tilde{k} })^{h-1}$, the minimal of the problem is a trajectory segment with the largest dynamically feasible prefix.

With $bias$ probability, instead of a random walk, we implement a biased walk. The control synthesis for a biased walk is the following \gls*{lp} problem:
\begin{equation}\label{eq:propagate:lp:biased}
\begin{aligned}
     \min\limits_{\substack{\hat{\boldsymbol{x}}_0,\dots, \hat{\boldsymbol{x}}_h \in \mathbb{R}^n, \\
     \lambda_0, \dots, \lambda_h \in \mathbb{R}, \\
     \boldsymbol{u}_0, \dots, \boldsymbol{u}_{h-1} \in \mathbb{R}^m}} \hspace{0.1cm} &  \lambda_h \\
     \mathrm{s.t. } & \hat{\boldsymbol{x}}_0 = v_{near}.\hat{\boldsymbol{x}}, \\
    & \hat{\boldsymbol{x}}_{h} = A_{q_{v.\tilde{k}}} \hat{\boldsymbol{x}}_{h-1} + B_{q_{v.\tilde{k}}} \boldsymbol{u}_{h-1},  \\
    &  \boldsymbol{h}^\intercal \hat{\boldsymbol{x}}_h + c  \leq \| \boldsymbol{h} \| \lambda_k, \forall \pi_\epsilon^{\boldsymbol{h}^\intercal \boldsymbol{x} + c} \in L(\tilde{s}_{v.\tilde{k} + 1}),  \\
    & \hat{\boldsymbol{x}}_{i} = A_{q_{v.\tilde{k}- 1}} \hat{\boldsymbol{x}}_{i-1} + B_{q_{v.\tilde{k}- 1}} \boldsymbol{u}_{i-1},  \\
    &  \boldsymbol{h}^\intercal \hat{\boldsymbol{x}}_{k-1} + c  \leq \| \boldsymbol{h} \| \lambda_k, \forall \pi_\epsilon^{\boldsymbol{h}^\intercal \boldsymbol{x} + c} \in L(\tilde{s}_{v.\tilde{k}}),  \\
    & \frac{\bar{\lambda}}{\delta} \lambda_{k-1} \leq \lambda_k, \boldsymbol{u}_{k-1} \in \mathcal{U}, \\
    & i = 1, \dots, h-1, k = 1, \dots, h.
\end{aligned}
\end{equation}
In plain English, we propagate a parent node $v_{near} \in \mathbb{V}_{active}$ belief mean $v_{near}.\hat{\boldsymbol{x}}$ towards a target polytopic region in the mean space $\mathcal{P}(\tilde{s}_{v.\tilde{k} + 1})$ for a random horizon $h \in [h_{lp}, h_{ub}]$. This problem minimizes the distance to the Chebyshev center of the tunnel $(\xrightarrow{q_{v.\tilde{k} - 1}} \tilde{s}_{v.\tilde{k}})^{h-1}\xrightarrow{q_{v.\tilde{k}}} \tilde{s}_{v.\tilde{k} + 1}$. Thus, as secondary goal, this problem also maximizes the \gls*{prstl} robustness in this segment.

 \begin{example}
    Consider that we selected the node with the blue trajectory in \cref{fig:light-dark-example-bestnearest}, target mean value is $\check{\hat{\boldsymbol{x}}} = [2.35, 1.75]^\intercal$, and the horizon $h = 3$. The trajectory associated with the new node is illustrated in \cref{fig:light-dark-example-drain} as the concatenation of blue and purple trajectories. 
 \end{example}

\subsection{Belief Trajectory with a Loop}

We establish that we can use an approximated belief system to solve \cref{prob:pr_1}. However, temporal logic specifications may require infinite behavior, and designing an arbitrary infinite control trajectory is not computationally tractable. Thus, we represent these infinite behaviors as trajectories with a loop.
 \begin{definition}\label{def:kltrajectory}
     Given a \gls*{prstl} formula $\mFormula$, an initial condition $\bar{\boldsymbol{X}}$, and a control signal $\boldsymbol{a} = (q_0, \boldsymbol{u}_0)\dots(q_k, \boldsymbol{u}_K)$, a belief system trajectory $\hat{\boldsymbol{\beta}}_{K, L}$ with a loop, or simply $(K,L)$-trajectory, is a sequence of the form:
     \begin{equation}
         \hat{\boldsymbol{\beta}}_{K, L} := \boldsymbol{X}_0 \xrightarrow{q_0, \boldsymbol{u}_0} \dots \boldsymbol{X}_{L-1}  \Big(\xrightarrow{q_{L-1}, \boldsymbol{u}_{L-1}} \boldsymbol{X}_L \dots \boldsymbol{X}_K\Big)^\omega,
     \end{equation}
     where $\hat{\boldsymbol{x}}_K = \hat{\boldsymbol{x}}_{L-1}$, and $\sqrt{\boldsymbol{h}^\intercal \Sigma_K^x \boldsymbol{h}} \leq \sqrt{\boldsymbol{h}^\intercal \Sigma_{L-1}^x \boldsymbol{h}}$ for all $\pi^{\mu}_\epsilon \in cl(\mFormula) : \boldsymbol{X}_{L-1} \mSat \pi^{\mu}_\epsilon$.
 \end{definition}

\begin{example}
    Consider the trajectory shown in \cref{fig:light-dark-example-workspace}. This trajectory can represent an infinite behavior by repeating a zero control input for any instant after nineteen. Hence, a $(K,L)$-loop trajectory $\boldsymbol{\beta}_{19,19}$ can be illustrated with the prefix $Prefix_{18}(\boldsymbol{\beta}_{19,19})$. The loop would be $(\xrightarrow{q_{18},\boldsymbol{u}_{18}} \boldsymbol{X}_{19})^\omega$, where $q_{18} = 1$, $\boldsymbol{u}_{18} = [0, 0]^\intercal$, and $\hat{\boldsymbol{x}}_{19} = \hat{\boldsymbol{x}}_{18} \approx [0, 2.15]^\intercal$. Since the system is observable, $\sqrt{\boldsymbol{h}^\intercal \Sigma_k^x \boldsymbol{h}} \leq \sqrt{\boldsymbol{h}^\intercal \Sigma_{18}^x \boldsymbol{h}}$ for all $\pi^{\mu}_\epsilon \in \{ \pi_{0.05}^{-x_1 - 0.25}, \pi_{0.05}^{x_1 - 0.25}, \pi_{0.05}^{-x_2 + 2}, \pi_{0.05}^{x_2 - 2.5} \}$ and $k > 18$. 
\end{example}

\subsection{PrSTL Semantics for Belief Trajectory with a Loop}

A belief $(K,L)$-trajectory does not have the covariance for instants larger than its bound (i.e., $K$). As a result, we cannot apply the \gls*{prstl} semantics in this trajectory without unrolling it. However, to guarantee completeness for unbounded temporal operators is not tractable because it requires us to unroll this trajectory infinite times. Therefore, we propose a tractable semantics for $(K,L)$-trajectories that ensure that $\hat{\boldsymbol{\beta}} \mSat \mFormula$ \textbf{if} $\hat{\boldsymbol{\beta}}_{K,L} \mSat \mFormula$, where $\hat{\boldsymbol{\beta}}$ is an infinite trajectory represented by the $(K,L)$-trajectory $\hat{\boldsymbol{\beta}}_{K,L}$. 

Formally, only the semantics of temporal operators are different from the original semantics presented in \cref{sec:prstl}:
\begin{itemize}
  \item $\hat{\boldsymbol{\beta}}_{K,L} \mSat_{k} \mFormula_1 \mUntil_{[a,b]} \mFormula_2$
    \begin{itemize}
        \item \textbf{if} $\boldsymbol{k + a \leq L}$ \textbf{and} $\exists k^\prime \in [k + a, \min(k+b, K)]$, $\boldsymbol{\beta}_{K,L} \mSat_{{k^\prime}}\mFormula_2$, and, $\forall k^{\prime\prime} \in [k + a, k^\prime]$, $\boldsymbol{\beta}_{K,L} \mSat_{{k^{\prime\prime}}}\mFormula_1$,
        \item \textbf{or if}, $Unroll(N_{K,L}(k + a), \boldsymbol{\beta}_{K,L}) \mSat_{k} \mFormula_1 \mUntil_{[a,b]} \mFormula_2$, ;
    \end{itemize}
    \item $\hat{\boldsymbol{\beta}}_{K,L} \mSat_{k} \mAlways_{[a,b]} \mFormula$ 
    \begin{itemize}
        \item \textbf{if} $\boldsymbol{k + a \leq L}$ \textbf{and}  $\forall  k^\prime \in [k + a, \min(k+b, K)]$, $\hat{\boldsymbol{\beta}}_{K,L} \mSat_{{k^\prime}}\mFormula_2$, 
        \item \textbf{or if}, $Unroll(N_{K,L}(k + a), \boldsymbol{\beta}_{K,L}) \mSat_{k} \mAlways_{[a,b]} \mFormula$,
    \end{itemize}
\end{itemize} 
where $N_{K,L}(k) = \max\left(\left\lceil\frac{k - L}{K + 1 - L}\right\rceil, 0\right)$, 
\begin{align*}
    Unroll(N, \boldsymbol{\beta}_{K,L}) =  \boldsymbol{X}_0 \dots \overbrace{\boldsymbol{X}_{L-1} \xrightarrow{q_{L-1}, \boldsymbol{u}_{L-1}} \boldsymbol{X}_L \dots \boldsymbol{X}_K \dots}^{N \text{ times}} \\
     \Big(\xrightarrow{q_{L^\prime-1}, \boldsymbol{u}_{L^\prime-1}} \boldsymbol{X}_{L^\prime} \dots \boldsymbol{X}_{K^\prime}\Big)^\omega,
\end{align*}
$L^\prime = L + N (K + 1 - L)$, and $K^\prime = K + N (K + 1 - L)$.

Intuitively, the bounded trajectory $\boldsymbol{X}_0 \dots \boldsymbol{X}_K$ is a conservative witness of the system infinite behavior of a $(K,L)$-trajectory $\hat{\boldsymbol{\beta}}_{K,L}$. By definition, the belief at the latest instant (i.e., $\boldsymbol{X}_K$) is less uncertain than the belief before the loop (i.e., $\boldsymbol{X}_{L-1}$). By repeating the same control inputs, the operator $Unroll$ conservatively approximates the uncertainty. Therefore, if this conservative trajectory satisfies the specification, the exact one surely satisfies too. 

\subsection{Quantitative Semantics for Trajectories with a Loop}

From the semantics for $(K,L)$-trajectory, we also obtain a real-valued function $\rho^\mFormula$ such that $\hat{\boldsymbol{\beta}} \mSat_k \mFormula$ \textbf{if} $\rho^\mFormula(\hat{\boldsymbol{\beta}}_{K,L}, k) \geq 0$. Similarly, only the quantitative semantics of temporal operators are redefined as follows:
\begin{align*}
&\rho^{\mFormula_1 \mUntil_{[a, b]} \mFormula_2}(\hat{\boldsymbol{\beta}}_{K,L}, k) = \\
&\begin{cases}
    \begin{aligned}
        \max_{k^\prime\in[k+a,\min(k+b, K)]}\big( \min(\rho^{\mFormula_2}(\hat{\boldsymbol{\beta}}_{K,L}, k^\prime), \\
        \min_{k^{\prime\prime} \in [k, k^\prime]} \rho^{\mFormula_1}(\hat{\boldsymbol{\beta}}_{K,L}, k^{\prime\prime})\big),
    \end{aligned}
    & \textbf{if }k + a \leq L \\
    \rho^{\mFormula_1 \mUntil_{[a, b]} \mFormula_2}\big(Unroll(N_{K,L}(k + a), \boldsymbol{\beta}_{K,L}), k\big), & \textbf{otherwise},
\end{cases} \\
&\rho^{\mAlways_{[a, b]} \mFormula_2}(\hat{\boldsymbol{\beta}}_{K,L}, k) = \\
&\begin{cases}
    \min_{k^\prime\in[k+a,\min(k+b, K)]}\big( \min(\rho^{\mFormula}(\hat{\boldsymbol{\beta}}_{K,L}, k^\prime)\big)
    & \textbf{if }k + a \leq L \\
    \rho^{\mFormula_1 \mUntil_{[a, b]} \mFormula_2}\big(Unroll(N_{K,L}(k + a), \boldsymbol{\beta}_{K,L}), k\big), & \textbf{otherwise}.
\end{cases}
\end{align*}
Since the uncertainty unrolling the loop using the operator $Unroll$ is larger than exact ones, the robustness measure $\rho^{\mFormula}(\hat{\boldsymbol{\beta}}_{K,L}, k)$ of a $(K,L)$-trajectory is a lower bound of the infinite trajectory $\hat{\boldsymbol{\beta}}$, i.e., $\rho^{\mFormula}(\hat{\boldsymbol{\beta}}_{K,L}, k) \leq \rho^{\mFormula}(\hat{\boldsymbol{\beta}}, k)$. Therefore, $\rho^{\mFormula}(\hat{\boldsymbol{\beta}}, k) \geq 0$ if $\rho^{\mFormula}(\hat{\boldsymbol{\beta}}_{K,L}, k) \geq 0$.

\subsection{Counter-examples}\label{sec:cex}

Since the abstraction simulates the approximated belief system, we may not find an approximated trajectory inside the tunnel formed by the abstraction trajectory. If this happens, we want to discard the abstraction trajectories in the feasibility tree in the future \gls*{bmc} executions. Our approach is to model these trajectories as a fair transition system.

An initialized fair transition system $TS_{fair}$ is defined by a tuple $(S, \bar{s}, Act, \delta, F)$, where $(S, \bar{s}, Act)$ is a transition system, and $F \subseteq S$ is a fairness condition. A trajectory of a fair transition system is valid only if starts at $\bar{s}$ and visit an state in $F$ infinitely often. 

Unlike the abstraction and approximated belief systems, the counter-example transition system actions are defined over Boolean formulas about transitions. A proposition about the abstraction trajectory transitions is a tuple $(\tilde{s}, q, \tilde{s}^\prime)$ such that an abstraction trajectory satisfies this proposition at instant $k$ ($\tilde{\beta} \mSat_k (\tilde{s}, q, \tilde{s}^\prime)$) if and only if $\tilde{s}_k = \tilde{s}$, $q_k = q$, and $\tilde{s}_{k+1} = \tilde{s}^\prime$. Now, we formally define the counter-example as a fair transition system:
\begin{definition}\label{def:cex}
    Given an abstraction trajectory $\tilde{\boldsymbol{\beta}}_{K,L}$, a counter-example is a fair transition system $\breve{TS}_{fair}(\tilde{\boldsymbol{\beta}}_{K,L})$ defined by the tuple $(\breve{S}, \bar{\breve{s}}, \breve{\delta}, \breve{Act}, \breve{F})$ where:
    \begin{itemize}
        \item $\breve{S} = \{ 0, \dots, K, K + 1\}$, $\bar{\breve{s}} = 0$, $\breve{F} = \{ K + 1 \}$,
        \item $\breve{Act}$ is a set of Boolean formulas over propositions about the abstraction trajectory transitions,
        \item the following transitions are valid:
        \begin{itemize}
            \item for $k = 0, \dots, K - 1$, 
            \begin{itemize}
                \item $k - 1 = \breve{\delta}\big(k, (\tilde{s}_k, q_k, \tilde{s}_{k-1}\big)$ if $k > 0$,
                \item $k + 1 = \breve{\delta}\big(k, (\tilde{s}_k, q_k, \tilde{s}_{k+1}\big)$,
                \item $k = \breve{\delta}\big(k, (\tilde{s}_k, q_k, \tilde{s}_{k})\big)$, and,
                \item $K + 1 = \breve{\delta}\big(k, \mNot (\tilde{s}_k, q_k, \tilde{s}_{k+1}) \mAnd \mNot (\tilde{s}_k, q_k, \tilde{s}_{k})\big)$;
            \end{itemize}
            \item $L = \breve{\delta}\big(K, (\tilde{s}_K, q_{L-1}, \tilde{s}_L\big)$;
            \item $K = \breve{\delta}\big(K, (\tilde{s}_K, q_{K-1}, \tilde{s}_K)\big)$; and,
            \item $K + 1 = \breve{\delta}\big(K, \mNot (\tilde{s}_K, q_{L-1}, \tilde{s}_L) \mAnd \mNot (\tilde{s}_K, q_{K-1}, \tilde{s}_K)\big)$,
            \item $K + 1 = \breve{\delta}(K + 1, \mTrue)$.
        \end{itemize}
    \end{itemize}
    
\end{definition}

The intuition is that any extension of the trajectory $\tilde{\boldsymbol{\beta}}_{K,L}$ generated by loops, i.e., generate by composition of the following transitions: $\tilde{s}_0 \rightarrow \tilde{s}_0$, $\tilde{s}_0 \xrightarrow{q_0} \tilde{s}_1$, $\dots$, $\tilde{s}_k \xrightarrow{q_{k-1}} \tilde{s}_{k-1}$, $\tilde{s}_k \xrightarrow{q_{k-1}} \tilde{s}_{k}$, $\tilde{s}_k \xrightarrow{q_k} \tilde{s}_{k+1}$, $\dots$, $\tilde{s}_K \xrightarrow{q_{K-1}} \tilde{s}_{K-1}$, $\tilde{s}_K \xrightarrow{q_{K-1}} \tilde{s}_{K}$, $\tilde{s}_K \xrightarrow{q_{L-1}} \tilde{s}_L$. Thus, if the trajectory do not take one of these transitions, the system goes to the counter-example state $K + 1$, which is the accepting state.

\section{Iterative-deepening Algorithm}\label{sec:idprstl}

The proposed iterative deepening search is illustrated in \cref{alg:idprstl}. A stop criterion in the \gls*{bmc} determines when it is not existentially satisfiable. This criterion is used as a conservative stop condition for iterative deepening the bounded checking. While we do not achieve that, if \gls*{bmc} find that the system existentially satisfies the \gls*{ltl} abstraction of the specification, we use \cref{alg:feas} to check if the example that proves this claim (an abstraction trajectory) is dynamically feasible. Otherwise, this algorithm returns a counter-example that we produced with the abstraction in the next \gls*{bmc} checking.

\begin{algorithm}
	\caption{$Iterative-deepening\_search$}\label{alg:idprstl}
    \begin{algorithmic}[1]
		\REQUIRE {\small System (\ref{eq:prsystem}), \mFormula, $, \bar{\boldsymbol{X}}, \mathcal{U}, N, \Delta_{drain}, \Delta_{near}, bias, h_{lb}, h_{ub}$}
		    \STATE $K = 1$
		    \STATE $\widehat{TS} $ from System (\ref{eq:prsystem})
		    \STATE $\widetilde{TS}$ from $\widehat{TS} $
		    \WHILE{\NOT $\widetilde{TS} \times \breve{TS}_{fair,1} \times \dots \times \breve{TS}_{fair,N} \not\mSat \boldsymbol{E} (\mFormula)_{LTL}$}
		    	\IF{$\widetilde{TS} \times \breve{TS}_{fair,1} \times \dots \times \breve{TS}_{fair,N} \mSat \boldsymbol{E} (\mFormula)_{LTL}$}
		    	    \IF{$feasibility\_search = \hat{\boldsymbol{\beta}}^\ast_{K^\ast, L^\ast}$}
		    	        \RETURN $\hat{\boldsymbol{\beta}}^\ast_{K^\ast, L^\ast}$
		    	    \ENDIF
		    	\ELSE
		    	    \STATE $K = K + 1$
		    	\ENDIF
		    \ENDWHILE
		    \RETURN unsat
\end{algorithmic}
\end{algorithm}

We show that \cref{alg:idprstl} only returns approximated belief trajectories that satisfy the specification.
\begin{theorem}\label{theo:feas}
\cref{alg:idprstl} returns a belief trajectory $\boldsymbol{\beta}_{K^\ast,L^\ast}^\ast$ \textbf{only if} this trajectory satisfies the \gls*{prstl} formula $\mFormula$.
\end{theorem}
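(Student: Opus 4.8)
The plan is to prove soundness by a direct trace of the control flow of \cref{alg:idprstl} and \cref{alg:feas}, reducing the claim to the conservative $(K,L)$-trajectory quantitative semantics established above. First I would observe that \cref{alg:idprstl} returns a belief trajectory (as opposed to \texttt{unsat}) only through its single \textbf{return} of $\hat{\boldsymbol{\beta}}^\ast_{K^\ast,L^\ast}$ on line 7, and this line is reached only when the call to $feasibility\_search$ returns a non-empty trajectory. Hence it suffices to show that any non-empty output of \cref{alg:feas} satisfies $\mFormula$.

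Next I would trace \cref{alg:feas}. The candidate $\hat{\boldsymbol{\beta}}^\ast_{K^\ast,L^\ast}$ is initialized to $\emptyset$ on line 1 and is reassigned to an actual trajectory $\hat{\boldsymbol{\beta}}^{v_{new}}_{K_{new},L_{new}}$ only inside the conditional block guarded by the test $\rho^{\mFormula}(\hat{\boldsymbol{\beta}}^{v_{new}}_{K_{new},L_{new}}, 0) \geq 0$ on line 8. Consequently, whenever \cref{alg:feas} returns a non-empty trajectory, that trajectory necessarily has non-negative robustness at $k = 0$, i.e.\ $\rho^{\mFormula}(\hat{\boldsymbol{\beta}}^\ast_{K^\ast,L^\ast}, 0) \geq 0$. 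This is the only place in either algorithm where a returnable trajectory is ever committed, so no other execution path can violate the invariant.

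It then remains to convert non-negative robustness of the finite $(K^\ast,L^\ast)$-representation into satisfaction of $\mFormula$ by the infinite belief trajectory it denotes. Here I would invoke the two facts established for trajectories with a loop: (i) the object produced by $propagate$ is a valid $(K,L)$-trajectory in the sense of \cref{def:kltrajectory}, so the conservative $Unroll$-based quantitative semantics is applicable to it; and (ii) that semantics is conservative, yielding the lower bound $\rho^{\mFormula}(\hat{\boldsymbol{\beta}}_{K,L}, 0) \leq \rho^{\mFormula}(\hat{\boldsymbol{\beta}}, 0)$, where $\hat{\boldsymbol{\beta}}$ is the infinite trajectory represented by $\hat{\boldsymbol{\beta}}_{K,L}$. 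Chaining these with the standard equivalence $\rho^{\mFormula}(\hat{\boldsymbol{\beta}}, 0) \geq 0 \Leftrightarrow \hat{\boldsymbol{\beta}} \mSat_0 \mFormula \Leftrightarrow \hat{\boldsymbol{\beta}} \mSat \mFormula$ delivers $\hat{\boldsymbol{\beta}} \mSat \mFormula$ from $\rho^{\mFormula}(\hat{\boldsymbol{\beta}}^\ast_{K^\ast,L^\ast}, 0) \geq 0$, which completes the argument.

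The main obstacle I anticipate is justifying steps (i)–(ii): the robustness test on line 8 is evaluated on the finite loop representation rather than on the true infinite trajectory, so the whole argument hinges on the legitimacy of the conservative $(K,L)$-semantics. In particular, I expect the delicate part to be verifying that the mean-closure condition $\hat{\boldsymbol{x}}_K = \hat{\boldsymbol{x}}_{L-1}$ and the covariance-monotonicity condition $\sqrt{\boldsymbol{h}^\intercal \Sigma_K^x \boldsymbol{h}} \leq \sqrt{\boldsymbol{h}^\intercal \Sigma_{L-1}^x \boldsymbol{h}}$ of \cref{def:kltrajectory} genuinely hold for every trajectory that $propagate$ emits, since only then is the lower-bound inequality $\rho^{\mFormula}(\hat{\boldsymbol{\beta}}_{K,L}, 0) \leq \rho^{\mFormula}(\hat{\boldsymbol{\beta}}, 0)$ available. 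Once that structural property is secured, the remainder is a routine reading of the two algorithms' return paths.
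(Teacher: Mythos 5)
Your proof is correct and follows essentially the same route as the paper's: both reduce the claim to the robustness check $\rho^{\mFormula}(\hat{\boldsymbol{\beta}}^{v_{new}}_{K_{new},L_{new}},0)\geq 0$ that guards the only assignment of a returnable trajectory in \cref{alg:feas}, combined with the conservativeness of the $(K,L)$-trajectory quantitative semantics, i.e., $\rho^{\mFormula}(\hat{\boldsymbol{\beta}}_{K,L},k)\leq\rho^{\mFormula}(\hat{\boldsymbol{\beta}},k)$, so that non-negative robustness of the finite representation implies satisfaction by the infinite trajectory. Your write-up is in fact more explicit than the paper's brief argument, which leaves the control-flow trace and the appeal to the conservative semantics implicit; the structural obstacle you flag (that $propagate$ must emit trajectories genuinely meeting the loop conditions of \cref{def:kltrajectory}) is likewise left unaddressed by the paper's own proof.
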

\begin{proof}
In \cref{alg:feas}, we check the specification using the quantitative semantics and ensure that the trajectory $\boldsymbol{\beta}_{K^{new},L^{new}}^{new}$ also satisfy the \gls*{prstl} formula $\mFormula$, i.e., $\boldsymbol{\beta}_{K^{new},L^{new}}^{new} \mSat \mFormula$. Since this algorithm synthesizes inputs for the system in \cref{eq:beliefsyssimple}, we have that the control inputs in $\boldsymbol{\beta}_{K^{new},L^{new}}^{new} \mSat \mFormula$ is an approximated solution for \cref{prob:pr_1}.
\end{proof}

Since \gls*{bmc} is complete, if no solution exists, \cref{alg:idprstl} finishes in a finite time.
\begin{theorem}\label{theo:completness}
\cref{alg:idprstl} always terminates in a finite time.
\end{theorem}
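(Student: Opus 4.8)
The plan is to split the termination argument into the termination of each individual computation performed inside the loop of \cref{alg:idprstl} and the finiteness of the number of iterations of the outer \textbf{while} loop itself, and to reduce the latter entirely to the finiteness of the abstraction $\widetilde{TS}$ guaranteed by \cref{prop:abs}. First I would record the invariant that everything the loop manipulates is either finite or monotonically shrinking: $\widetilde{TS}$ has finitely many states (one per conic partition induced by the predicates in $cl(\mFormula)$), the bound $K$ only ever increases, and appending a counter-example $\breve{TS}_{fair,i}$ only removes behaviours from the product.

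Next I would dispatch the inner computations. A call to the \gls*{bmc} solver at a fixed bound $K$ is a single bounded query over the finite-state product $\widetilde{TS} \times \breve{TS}_{fair,1} \times \dots \times \breve{TS}_{fair,N}$, which terminates and whose soundness and completeness I inherit from \cite[Propositions 3 and 4]{rodriguesdasilva2021automatic}. The feasibility search \cref{alg:feas} executes a \textbf{for} loop of exactly $N$ iterations and then returns — in the worst case, a counter-example $\breve{TS}_{fair}(\tilde{\boldsymbol{\beta}}_{K,L})$ — so it terminates unconditionally. Thus a single pass through the loop body always completes.

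The core of the argument is that the outer loop advances in only two ways — incrementing $K$ (line 10) or appending a counter-example — and I would bound each. Fix the current product. At a fixed bound $K$, finiteness of $\widetilde{TS}$ means there are only finitely many candidate $(K,L)$-abstraction trajectories with $\tilde{\boldsymbol{\beta}}_{K,L} \mSat \boldsymbol{E} (\mFormula)_{LTL}$, and by \cref{def:cex} each appended counter-example strictly removes at least one of them; hence only finitely many counter-examples can be added at that bound before the query stops returning SAT, at which point the \textbf{else} branch fires and $K$ increments. For a fixed product the longest loop-free path has length bounded by the (finite) product state count, so the \gls*{bmc} stop criterion — which declares \emph{unsat} once no loop-free trajectory of bound $K$ survives — is reached at a finite $K$. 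Consequently, as long as no new counter-example enlarges the product, the loop makes only finite progress before exiting at line 7 with $\hat{\boldsymbol{\beta}}^\ast_{K^\ast,L^\ast}$ or at line 13 with \emph{unsat}.

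The hard part, and the main obstacle, is ruling out the remaining scenario in which infinitely many counter-examples are appended, each potentially enlarging the product and pushing the reachable bound $K$ ever higher. Here I would exploit the defining feature of \cref{def:cex}: discarding $\tilde{\boldsymbol{\beta}}_{K,L}$ also discards \emph{every} trajectory obtained by pumping its loop, so a counter-example eliminates an entire lasso family rather than a single finite path. Because a finite-state $\widetilde{TS}$ admits only finitely many such families — each representable by a minimal $(K,L)$-trajectory whose bound is at most a constant multiple of $|\tilde{S}|$ — only finitely many counter-examples are ever produced across all bounds, so the product stabilises after finitely many appends. Combined with the per-product bound of the previous paragraph, this yields a finite total iteration count. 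Making the lasso-family counting precise, and verifying that the product with the $\breve{TS}_{fair,i}$ rejects exactly the families characterised in \cref{def:cex} rather than fewer, is the delicate step, and it is where the completeness of the existential \gls*{bmc} encoding \cite{rodriguesdasilva2021automatic} must be combined with the construction of the counter-example automaton.
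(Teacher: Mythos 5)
Your proposal is correct and follows essentially the same route as the paper's proof: both arguments combine the completeness of the \gls*{bmc} encoding from \cite[Proposition 4]{rodriguesdasilva2021automatic}, termination of \cref{alg:feas} via the timeout parameter $N$, the finiteness of loop-free trajectories in the finite abstraction $\widetilde{TS}$, and the key fact that each counter-example of \cref{def:cex} excludes an entire family of loop-generated extensions rather than a single trajectory. Your write-up is more explicit than the paper's (which states these ingredients in a few sentences), but the decomposition and the key ideas are the same.
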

\begin{proof}
First, from \cite[Proposition 4]{rodriguesdasilva2021automatic}, we have that \gls*{bmc} is complete. Thus, the abstraction $\widetilde{TS}$ does not existentially satisfies the formula $(\mFormula)_{LTL}$, and, from \cref{prop:abs,prop:ltl_trans}, there is no solution for \cref{prob:pr_1}. \cref{alg:feas} always finishes in a finite time because of the timeout parameter $N$. Finally, if there are satisfying trajectories in the abstraction but none is dynamically feasible, a finite number of counterexamples exclude them for two reasons. First, there are finite loop-free trajectories in a transition system with a finite number of states. Second, the counterexamples exclude a trajectory and all others generated by loops. Therefore,  \cref{alg:idprstl} must terminate in a finite time.
\end{proof}

\section{Experiments}\label{sec:experiments}

\subsection{Light-Dark Domain}
    As discussed in \cref{ex:light-dark-system,ex:light-dark-specs}, the light-dark domain is a robot motion planning problem that requires active perception. The robot must move to a target position, but the position measurement is noisy, and the noise level depends on the robot's current position. 
    
    \cref{fig:light-dark-example-workspace} shows a solution to the light-dark domain. The blue trajectory is the planned trajectory found by the proposed algorithm with feasibility timeout $t_{out} = 15s$, near distance $\Delta_{near} = 1$, drain distance $\Delta_{drain} = 0.5$, goal bias $bias = 0.25$, and horizon bound $h_{lb} = 3$ and $h_{ub} = 8$. We execute this plan using a \gls*{rhc} strategy with a finite horizon discrete time \gls*{lqr} to track the mean of the estimated belief with cost function $J = \hat{\boldsymbol{x}}_h^\intercal Q \hat{\boldsymbol{x}}_h + \sum_{k=0}^{h - 1} \hat{\boldsymbol{x}}_k^\intercal Q \hat{\boldsymbol{x}}_k + \boldsymbol{u}_k^\intercal R \boldsymbol{u}_k$, where $Q = I_2$ and $R = 0.05 I_2$ and the horizon $h = 5$. The red trajectory is the state trajectory, the purple trajectory is the output trajectory, and the orange is the resulting belief trajectory. Note that the output trajectory is noisy and cannot be used without considering its uncertainty. Even that the initial state is outside the region of $95\%$ of confidence, the tracking strategy was able to follow the planned belief trajectory. As a result, it accomplished the task. 
    
    We use Monte Carlo simulation to compute the probability that the actual trajectory generated during the execution satisfies the specifications. From $30$ different executions of \cref{alg:idprstl}, we obtained an average probability of $92.07\% \pm 9.21$. This probability is fairly high considering that it is the product that the belief at each instant satisfies all predicates satisfied in the approximated trajectory. In \cref{fig:light-dark-example-result}, we show one of these results for one hundred Monte Carlo simulations. Each red point is a state in one of these simulations. The result is that the probability of the plan satisfying the specification using the tracking strategy is $96\%$. This probability is within the probability in the specification (i.e., $95\%$ of confidence). We can see in \cref{fig:light-dark-example-result} that the actual trajectories (in red) that failed the specifications failed at achieving the target region. 
    \begin{figure}
        \centering
        \includegraphics[width=0.9\linewidth]{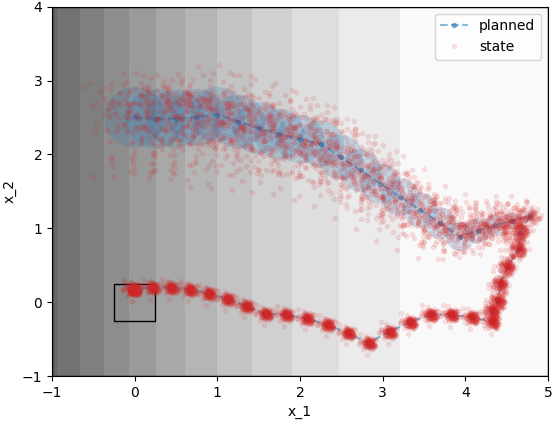}
        \caption{Illustrative representation of one hundred Monte Carlo simulations of tracking the planned trajectory in the light-dark domain.}
        \label{fig:light-dark-example-result}
        \vspace{-0.5cm}
    \end{figure}
    
    Although we approximate the observations during the planning, this experiment suggests that this approximation can capture essential behaviors of the plan execution. Additionally, the task was not satisfiable without the observations due to the initial uncertainty. Thus, the planning strategy had to combine not only task and motion planning but also active perception. 
    
\subsection{Robot Manipulation Domain}

    The robot manipulation domain is introduced in \cref{ex:laser-grasp-system,ex:laser-grasp-specs}. The robot must grasp and place a puck down in a target position. However, the puck and target position is not known perfectly. Therefore, the robot must also ``learn'' these positions during the execution of the plan (i.e., active perception). 
    
    Consider that the robot gripper starts at $(17.5, 17.5, \pi / 2)$ in the robot configuration space. In the same configuration space, the puck is initially at $(3.5, 12)$, but our initial belief is that it is at $\mathcal{N}\big((1, 10), I_2)\big)$. Similarly, the center of the target location is at $(6.5, 3)$, but our initial belief is that it is at $\mathcal{N}\big((10, 2), 2 I_2)\big)$. Note that the robot may not grasp the puck without active perception because of the initial error between the initial belief and the real puck position. Additionally, the robot could not place the puck down correctly in the target location even if it grasped it. Therefore, we need active perception.
    
    We generated a planned trajectory using the proposed algorithm with feasibility timeout $t_{out} = 60s$, near distance $\Delta_{near} = 5$, drain distance $\Delta_{drain} = 2.5$, goal bias $bias = 0.25$, and horizon bound $h_{lb} = 2$ and $h_{ub} = 5$. During execution this plan, we use a finite horizon \gls*{lqr} to track the planned belief state with parameters $Q = diag([1, 1, 0, 0, 1])$ for mode $q = 1$, $Q = diag([0, 0, 1, 1, 1])$ for mode $q = 3$, $R = 0.05 I_2$ and the horizon $h = 10$. The result of one execution is shown in \cref{fig:laser-grasp-solution}. 
    
\begin{figure}
    \centering
    \begin{subfigure}[b]{0.4\textwidth}
        \centering
        \includegraphics[width=\textwidth]{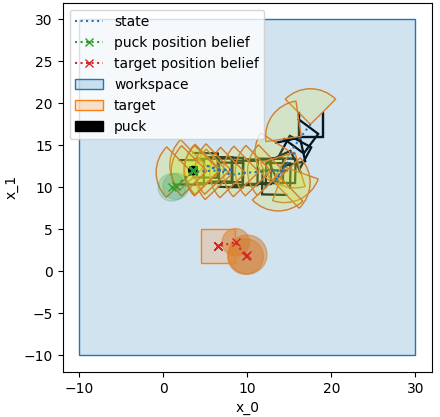}
        \caption{Before Grasping}
        \label{fig:laser-grasp-solution-before-grasping}
    \end{subfigure}
    \begin{subfigure}[b]{0.4\textwidth}
        \centering
        \includegraphics[width=\textwidth]{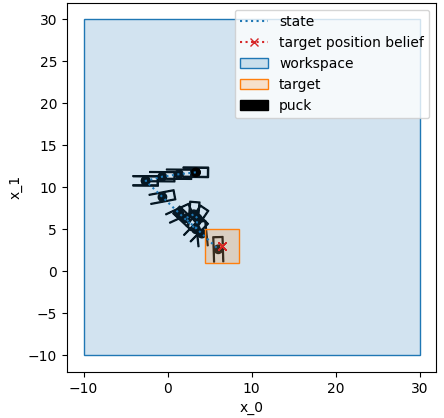}
        \caption{After Grasping}
        \label{fig:laser-grasp-solution-after-grasping}
    \end{subfigure}
    \caption{Illustrative representation of a solution for the robot manipulation domain. The trajectory is on the robot workspace, and the relative position of the puck and target belief position is translated to this workspace. The gripper is represented in black for each state position, and a yellow wedge represents the camera.}
    \label{fig:laser-grasp-solution}
\vspace{-0.5cm}
\end{figure}
    
    Note that the robot moves the gripper towards the belief target location before grasping the puck. Hence, the algorithm automatically identifies that the robot must learn the ``actual'' target position before holding the puck when the camera is blocked. By doing that, the robot can correctly execute the task even under incorrect initial belief. 

%The experime are implemented by our  used idSTLPy toolbox \cite{dasilva2022idstlpy} to run these experiments. This a open-source toolbox available at \url{https://codeocean.com/capsule/0013534/tree}.

\section{Conclusion and Future Work}\label{sec:conclusion}

We presented a framework for controller synthesis from PrSTL specifications for active perception tasks. This framework allows for active perception in complex tasks specified by PrSTL formulas. We demonstrated the efficacy of our approach on a simulation of robot motion planning and robot manipulation problems.  We also implemented our design methods into a toolbox, called idSTLPy \cite{dasilva2022idstlpy}, which is available at \texttt{\url{https://codeocean.com/capsule/0013534/tree}}.
In our future work, we plan to extend this results to other probabilistic hybrid systems and also consider probabilistic switching. Another direction is to replace the \gls*{mlo} approximation during the planning.

\bibliographystyle{IEEEtran}
\bibliography{IEEEabrv,library}

% Generated by IEEEtran.bst, version: 1.14 (2015/08/26)
\begin{thebibliography}{10}
\providecommand{\url}[1]{#1}
\csname url@samestyle\endcsname
\providecommand{\newblock}{\relax}
\providecommand{\bibinfo}[2]{#2}
\providecommand{\BIBentrySTDinterwordspacing}{\spaceskip=0pt\relax}
\providecommand{\BIBentryALTinterwordstretchfactor}{4}
\providecommand{\BIBentryALTinterwordspacing}{\spaceskip=\fontdimen2\font plus
\BIBentryALTinterwordstretchfactor\fontdimen3\font minus
  \fontdimen4\font\relax}
\providecommand{\BIBforeignlanguage}[2]{{%
\expandafter\ifx\csname l@#1\endcsname\relax
\typeout{** WARNING: IEEEtran.bst: No hyphenation pattern has been}%
\typeout{** loaded for the language `#1'. Using the pattern for}%
\typeout{** the default language instead.}%
\else
\language=\csname l@#1\endcsname
\fi
#2}}
\providecommand{\BIBdecl}{\relax}
\BIBdecl

\bibitem{valencia2013planning}
R.~Valencia, M.~Morta, J.~Andrade-Cetto, and J.~M. Porta, ``{Planning reliable
  paths with pose SLAM},'' \emph{IEEE Transactions on Robotics}, vol.~29,
  no.~4, pp. 1050--1059, 2013.

\bibitem{agha2014firm}
S.~Chakravorty and N.~M. Amato, ``{FIRM : Sampling-based Feedback Motion
  Planning Under Motion Uncertainty and Imperfect Measurements},'' \emph{The
  International Journal of Robotics Research}, vol.~1, no.~2, pp. 1--40, 2003.

\bibitem{bai2014integrated}
H.~Bai, D.~Hsu, and W.~S. Lee, ``{Integrated perception and planning in the
  continuous space: A POMDP approach},'' \emph{International Journal of
  Robotics Research}, vol.~33, no.~9, pp. 1288--1302, 2014.

\bibitem{krishnamurthy2016partially}
V.~Krishnamurthy, \emph{{Partially Observed Markov Decision Processes}}.\hskip
  1em plus 0.5em minus 0.4em\relax Cambridge University Press, 2016.

\bibitem{alur2013syntax}
R.~Alur, R.~Bodik, G.~Juniwal, M.~M.~K. Martin, M.~Raghothaman, S.~A. Seshia,
  R.~Singh, A.~Solar-Lezama, E.~Torlak, and A.~Udupa, ``{Syntax-guided
  synthesis},'' in \emph{Formal Methods in Computer-Aided Design (FMCAD),
  2013}.\hskip 1em plus 0.5em minus 0.4em\relax IEEE, 2013, pp. 1--8.

\bibitem{reynolds2015counterexample}
A.~Reynolds, M.~Deters, V.~Kuncak, C.~Tinelli, and C.~Barrett,
  ``{Counterexample-guided quantifier instantiation for synthesis in SMT},'' in
  \emph{International Conference on Computer Aided Verification}.\hskip 1em
  plus 0.5em minus 0.4em\relax Springer, 2015, pp. 198--216.

\bibitem{platt2010belief}
R.~{Platt Jr}, R.~Tedrake, L.~Kaelbling, and T.~Lozano-Perez, ``{Belief space
  planning assuming maximum likelihood observations},'' 2010.

\bibitem{indelman2016towards}
V.~Indelman, L.~Carlone, and F.~Dellaert, ``{Towards planning in generalized
  belief space},'' in \emph{Robotics Research}.\hskip 1em plus 0.5em minus
  0.4em\relax Springer, 2016, pp. 593--609.

\bibitem{ridderhof2020chance}
J.~Ridderhof, K.~Okamoto, and P.~Tsiotras, ``Chance constrained covariance
  control for linear stochastic systems with output feedback,'' in \emph{2020
  59th IEEE Conference on Decision and Control (CDC)}.\hskip 1em plus 0.5em
  minus 0.4em\relax IEEE, 2020, pp. 1758--1763.

\bibitem{zheng2021belief}
D.~Zheng, J.~Ridderhof, P.~Tsiotras, and A.-a. Agha-mohammadi, ``Belief space
  planning: A covariance steering approach,'' \emph{arXiv preprint
  arXiv:2105.11092}, 2021.

\bibitem{sadigh2015safe}
D.~Sadigh and A.~Kapoor, ``{Safe Control under Uncertainty with Probabilistic
  Signal Temporal Logic},'' in \emph{Proceedings of Robotics: Science and
  Systems}, AnnArbor, Michigan, jun 2016.

\bibitem{dey2016fast}
D.~Dey, D.~Sadigh, and A.~Kapoor, ``{Fast Safe Mission Plans for Autonomous
  Vehicles},'' 2016.

\bibitem{zhong2017fast}
K.~Zhong, P.~Jain, and A.~Kapoor, ``{Fast second-order cone programming for
  safe mission planning},'' in \emph{2017 IEEE International Conference on
  Robotics and Automation (ICRA)}, may 2017, pp. 79--86.

\bibitem{raman2014model}
V.~Raman, A.~Donz{\'{e}}, M.~Maasoumy, R.~M. Murray,
  A.~Sangiovanni-Vincentelli, and S.~A. Seshia, ``{Model predictive control
  with signal temporal logic specifications},'' in \emph{53rd IEEE Conference
  on Decision and Control}, dec 2014, pp. 81--87.

\bibitem{da2019active}
R.~R. da~Silva, V.~Kurtz, and H.~Lin, ``Active perception and control from
  temporal logic specifications,'' \emph{IEEE Control Systems Letters}, vol.~3,
  no.~4, pp. 1068--1073, 2019.

\bibitem{rodriguesdasilva2021automatic}
R.~Rodriguesdasilva, V.~Kurtz, and H.~Lin, ``Automatic trajectory synthesis for
  real-time temporal logic,'' \emph{IEEE Transactions on Automatic Control},
  2021.

\bibitem{da2021symbolic}
R.~R. da~Silva, V.~Kurtz, and H.~Lin, ``Symbolic control of hybrid systems from
  signal temporal logic specifications,'' \emph{Guidance, Navigation and
  Control}, vol.~1, no.~02, p. 2150008, 2021.

\bibitem{chui2017kalman}
C.~K. Chui and G.~Chen, \emph{{Kalman Filtering with Real-Time Applications}},
  5th~ed., ser. Springer Series in Information Sciences, volume 17, 2017.

\bibitem{baier2008principles}
C.~Baier, J.-P. Katoen, and K.~G. Larsen, \emph{{Principles of model
  checking}}, 2008.

\bibitem{biere2006linear}
A.~Biere, K.~Heljanko, T.~Junttila, T.~Latvala, and V.~Schuppan, ``Linear
  encodings of bounded ltl model checking,'' \emph{arXiv preprint cs/0611029},
  2006.

\bibitem{littlefield2013efficient}
Z.~Littlefield, Y.~Li, and K.~E. Bekris, ``{Efficient sampling-based motion
  planning with asymptotic near-optimality guarantees for systems with
  dynamics},'' in \emph{Intelligent Robots and Systems (IROS), 2013 IEEE/RSJ
  International Conference on}.\hskip 1em plus 0.5em minus 0.4em\relax IEEE,
  2013, pp. 1779--1785.

\bibitem{dasilva2022idstlpy}
R.~R. da~Silva, K.~Yadav, and H.~Lin, ``idstlpy: A python toolbox for active
  perception,'' in \emph{to appear at 2022 Annual American Control Conference
  (ACC)}, 2022.

\end{thebibliography}

\begin{IEEEbiography}
    [{\includegraphics[width=1in,height=1.25in,clip,keepaspectratio]{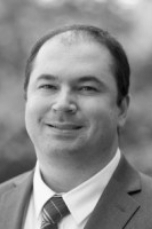}}]{Rafael Rodrigues da Silva} (M'16) received a bachelor in science (BS) degree in Electrical Engineering from the State University of Santa Catarina, in Joinville, Brazil, his hometown. Before going into academia, he had worked for almost ten years in the steel industry as a control engineer, where he lead a team of high qualified engineers and technicians in several projects. During this time, he received a master in science (MS) degree in Electrical Engineering and Industrial Computing at UTFPR, in Curitiba, Brazil, when he focused his research in high-performance Genetic Algorithms in FPGA (Field Programmable Gate Array) for Computer Vision. He is currently pursuing the Ph.D. degree in electrical engineering at University of Notre Dame, Notre Dame, IN, USA.
    
    His research interest focus on the design of intelligent physical systems, and, currently, he is working with verification and automatic synthesis of hybrid systems from high-level specifications.
\end{IEEEbiography}

\begin{IEEEbiography}
    [{\includegraphics[width=1in,height=1.25in,clip,keepaspectratio]{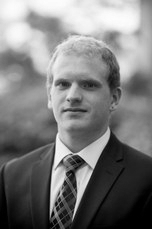}}]{Vince Kurtz}
    is a Dolores Z. Liebmann fellow and a PhD student in Electrical Engineering at the University of Notre Dame (Notre Dame Indiana, 46556). He studied physics at Goshen College (Goshen Indiana, 46526). His research interest in long-term autonomy for high degree-of-freedom robots lies at the intersection of robotics, control theory, and formal methods. 
\end{IEEEbiography}

\begin{IEEEbiography}
    [{\includegraphics[width=1in,height=1.25in,clip,keepaspectratio]{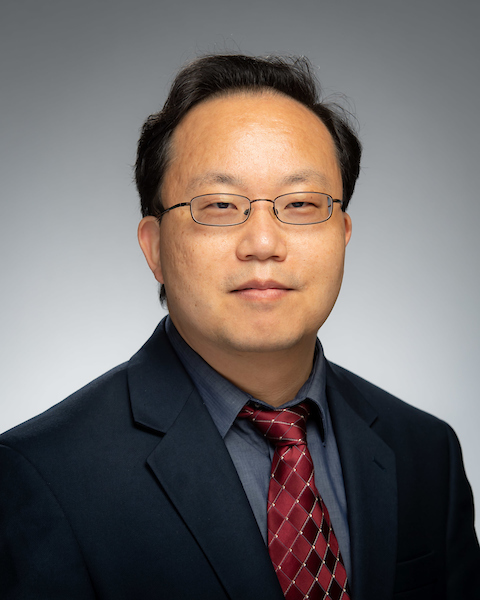}}]{Hai Lin} 
(SM’10) is currently a full professor at the Department of Electrical Engineering, University of Notre Dame, where he got his Ph.D. in 2005. Before returning to his {\em alma mater}, he has been working as an assistant professor in the National University of Singapore from 2006 to 2011. Dr. Lin's recent teaching and research activities focus on the multidisciplinary study of fundamental problems at the intersections of control theory, machine learning and formal methods. His current research thrust is motivated by challenges in cyber-physical systems, long-term autonomy, multi-robot cooperative tasking, and human-machine collaboration. Dr. Lin has been served in several committees and editorial board, including {\it IEEE Transactions on Automatic Control}. He served as the chair for the IEEE CSS Technical Committee on Discrete Event Systems from 2016 to 2018, the program chair for IEEE ICCA 2011, IEEE CIS 2011 and the chair for IEEE Systems, Man and Cybernetics Singapore Chapter for 2009 and 2010. He is a senior member of IEEE and a recipient of 2013 NSF CAREER award.
\end{IEEEbiography}

\end{document}